\newtheorem{definition}{Definition}
\newtheorem{proposition}{Proposition}
\newtheorem{theorem}{Theorem}
\newtheorem{example}{Example}
\newtheorem{lemma}{Lemma}
\title{
Unifying Behavioral and Response Diversity for Open-ended Learning in Zero-sum Games
}
\author{%
  Xiangyu Liu$^1$, Hangtian Jia$^2$, Ying Wen$^1$\thanks{Correspondence to Ying Wen <ying.wen@sjtu.edu.cn>.}, Yaodong Yang$^3$, Yujing Hu$^2$,  \\
  \textbf{Yingfeng Chen$^2$, Changjie Fan$^2$ and Zhipeng Hu$^2$}\\
  $^1$Shanghai Jiao Tong University, $^2$Netease Fuxi AI Lab, $^3$University College London
}
\begin{document}

\maketitle

\begin{abstract}

Measuring and promoting policy diversity is critical for solving games with strong non-transitive dynamics where strategic cycles exist, and there is no consistent winner (e.g., Rock-Paper-Scissors). 
With that in mind, maintaining a pool of diverse policies via open-ended learning is an attractive solution, which can generate auto-curricula to avoid being exploited. 
However, in conventional open-ended learning algorithms, there are no widely accepted definitions for diversity, making it hard to construct and evaluate the diverse policies.
In this work, we summarize previous concepts of diversity and work towards offering a unified measure of diversity in multi-agent open-ended learning to include all elements in Markov games, based on both \textit{Behavioral Diversity (BD)} and \textit{Response Diversity (RD)}. 
At the trajectory distribution level, we re-define BD in the state-action space as the discrepancies of occupancy measures.  
For the reward dynamics, we propose RD to characterize diversity through the responses of policies when encountering different opponents. 
We also show that many current diversity measures fall in one of the categories of BD or RD but not both.
With this unified diversity measure, we design the corresponding diversity-promoting objective and population effectivity when seeking the best responses in open-ended learning. 
We validate our methods in both relatively simple games like matrix game, non-transitive mixture model, and the complex \textit{Google Research Football} environment. The population found by our methods reveals the lowest exploitability, highest population effectivity in matrix game and non-transitive mixture model, as well as the largest goal difference when interacting with opponents of various levels in \textit{Google Research Football}.
\end{abstract}

\section{Introduction}
Zero-sums games involve non-transitivity~\cite{balduzzi2019open, czarnecki2020real} in the policy space, and thus each player must acquire a diverse set of winning strategies to achieve high unexploitability~\cite{yang2021diverse}, which has been widely validated by recent studies of constructing AIs with superhuman performance in sophisticated tasks, such as StarCraft \cite{vinyals2019grandmaster,peng2017multiagent} and DOTA2  \cite{OpenAI_dota,ye2020towards}. 
The non-transitivity in games means there is not a dominating strategy and the set of strategies form a cycle (e.g., the endless cycles among Rock, Paper and Scissors). It is the presence of this special structure in games that requires players to maintain a diverse set of policies. Otherwise, we only need to seek the strongest one. 
Formally, the necessity of diversity for zero-sum games lies in three ways: (1) policy evaluation: with the presence of non-transitivity, one cannot justify the strength or weakness of a strategy through the outcome of the interaction with a single type of opponent; (2) avoiding being exploited~\cite{nieves2021modelling}: since in non-transitive games a single strategy can be always beaten by another one, a diverse set of strategies allows players to make corresponding responses when encountering different opponents; (3) training adaptable strategies~\cite{wu2021l2e}: a diverse set of training opponents helps gradually eliminate the weakness of a strategy, which can adapt to a wide range of opponents with very few interactions at test time.

The open-ended learning framework is a promising direction towards inducing a population of distinct policies in zero-sum games via auto-curricula.
Although various open-ended algorithms have been proposed to derive diverse strategies~\cite{balduzzi2019open, nieves2021modelling, elfeki2019gdpp, mckee2021quantifying}, there are no consistent definitions for diversity. One of the most intuitive principles to characterize diversity is to build metrics over the trajectory or state-action distribution~\cite{elfeki2019gdpp, masood2019diversity}. However, this perspective only focuses on the policy behaviors and ignores the reward attributes inherited from the Markov decision process. We argue that this is not reasonable since sometimes a slight difference in the policy can result in a huge difference in the final reward like, in Maze. Contrary to this, another line of works builds the diversity measure over empirical payoffs~\cite{nieves2021modelling, balduzzi2019open}, thus revealing the underlying diverse behaviors of a strategy through the responses when encountering distinct opponents.

In this work, based on all previous diversity concepts, we work towards offering a unified view for diversity in an open-ended learning framework by combining both the behavioral attribute and the response attribute of a strategy. The behavioral diversity is formulated through the occupancy measure, which is an equivalent representation of a policy. We hypothesize that the diversity in policy behaviors should be revealed by differences in the state-action distribution, and we use a general divergence family $f$-divergence to indicate the novelty of a new policy. On the other hand, gamescape~\cite{balduzzi2019open} has been proposed to represent the response capacity of a population of strategies. Based on gamescape, we formulate a new geometric perspective to treat the response diversity by considering the distance to the gamescape. 

To summarize, in this paper, we provide the following contributions:
\begin{itemize}
\item We formulate the concept of \textit{behavioral diversity} in the state-action space as the discrepancies of occupancy measures and analyze the optimization methods in both normal-form games and general Markov games.
\item We provide a new geometric perspective on the \textit{response diversity} as a form of Euclidean projection onto the convex hull of the meta-game to enlarge the gamescape directly and propose the optimization lower bound for practical implementation.
\item We analyze the limitation of exploitability as the evaluation metric and introduce a new metric with theoretical soundness called \textit{population effectivity}, which is a fairer way to represent the effectiveness of a population than exploitability~\cite{lanctot2017unified}.
\end{itemize}


\section{Preliminaries}
\subsection{Markov Games}
The extension of Markov decision processes (MDPs) with more than one agents is commonly modelled as Markov games~\cite{littman1994markov}. A Markov game with $N$ agents is defined by a tuple $<N, \mathcal{S}, \{\mathcal{A}_{i}\}_{i=1}^{N}, P, \{r_i\}_{i=1}^{N}, \eta, \gamma>$, where $\mathcal{S}$ denotes the state space and $\mathcal{A}_i$ is the action space for agent $i$. The function $P$ controls the state transitions by the current state and one action from each agent: $P: \mathcal{S}\times \mathcal{A}_1 \times \dots \times \mathcal{A}_N\rightarrow \mathcal{P}(\mathcal{S})$, where $\mathcal{P}(\mathcal{S})$ denotes the set of probability distributions over the state space $\mathcal{S}$. Given the current state $s_t$ and the joint action $(a_1, \dots, a_N)$, the transition probability to $s_{t+1}$ is given by $P(s_{t+1}|s_t, a_1, \dots , a_{N})$. The initial state distribution is given by $\eta:\mathcal{S}\rightarrow [0, 1]$. Each agent $i$ also has an associated reward function $r_i: \mathcal{S}\times \mathcal{A}_i\times \dots \times \mathcal{A}_N\rightarrow \mathbb{R}$. Each agent's goal is to maximize the $\gamma$-discounted expected return $R_i = \mathbb{E}[\sum_{t=0}^{\infty} \gamma^{t}r_{i}(s_t, a_i^{t}, a_{-i}^{t})]$, where $-i$ is a compact representation of all complementary agents of $i$. Specifically, for zero-sum games, the rewards satisfy that $\sum_{i=1}^{N}r_i(s, \mathbf{a}) = 0$, and players need to behave competitively to achieve higher rewards. 

In multi-agent reinforcement learning (MARL) \cite{yang2020overview}, each agent is equipped with a policy $\pi_i: \mathcal{S}\times\mathcal{A}_i\rightarrow [0, 1]$ and the joint policy is defined by $\bm{\pi}(\mathbf{a}|s) = \Pi_{i=1}^{N}\pi_{i}(a_i|s)$. In single-agent reinforcement learning, \textit{occupancy measure} is a principle way to characterize a policy, which indicates how a policy covers the state-action space. Inspired by the definition from the single-agent setting, we define the joint occupancy measure in MARL induced by the joint policy $\bm{\pi}(\mathbf{a}|s)$ as:
\begin{definition}
(Joint Occupancy Measure in MARL) Let $\rho_{\bm{\pi}}(s): \mathcal{S}\rightarrow \mathbb{R}$ denote the normalized distribution of state visitation by following the joint policy $\bm{\pi} = (\pi_i, \pi_{-i})$ in the environment:
\begin{align}
\rho_{\bm{\pi}}(s) = (1-\gamma)\sum_{t=0}^{\infty}\gamma^{t}P(s_t=s|\bm{\pi})~.
\end{align}
Then the distribution of state-action pairs $\rho_{\bm{\pi}}(s, \mathbf{a}) = \rho_{\bm{\pi}}(s)\bm{\pi}(\mathbf{a}|s)$ is called occupancy measure of the joint policy $\bm{\pi}$.
\end{definition}

\subsection{Policy Space Response Oracle}
Adapted from double oracle~\cite{mcmahan2003planning}, policy space response oracle (PSRO) \cite{lanctot2017unified,dinh2021online} has been serving as a powerful tool to solve the nash equilibrium (NE) in zero-sum games. In PSRO, each player maintains a pool of policies, say $\mathfrak{P}_{i} = \{\pi_{i}^{1}, \dots, \pi_{i}^{M}\}$ for player $i$ and $\mathfrak{P}_{-i} = \{\pi_{-i}^{1}, \dots, \pi_{-i}^{N}\}$ for player $-i$. The so-called meta game $\mathbf{A}_{\mathfrak{P}_{i}\times\mathfrak{P}_{-i}}$ has its $(k, j)$ entry as $\phi_i(\pi_{i}^{k}, \pi_{-i}^{j})$, where the function $\phi_i$ encapsulates the reward outcome for player $i$ like the winning rate or expected return. When player $i$ adds a new policy $\pi_i^{M+1}$, it will compute the best response to the mixture of its opponents:
\begin{align*}
    \operatorname{Br}(\mathfrak{P}_{-i}) = \max_{\pi_i^{M+1}}\sum_{j}\sigma_{-i}^{j}\mathbb{E}_{\pi_{i}^{M+1}, \pi_{-i}^{j}}[r_i(s, \mathbf{a})].
\end{align*}
where $\bm{\sigma} = (\sigma_i, \sigma_{-i})$ is a distribution over policies in $\mathfrak{P}_i$ and $\mathfrak{P}_{-i}$, which is usually a NE of $\mathbf{A}_{\mathfrak{P}_{i}\times\mathfrak{P}_{-i}}$.

The empirical gamescape is introduced by \cite{balduzzi2019open} to represent the expressiveness of a population $\mathfrak{P}_i$   in the reward outcome level given the opponent population $\mathfrak{P}_{-i}$:
\begin{definition}
Given population $\mathfrak{P}_i$ and $\mathfrak{P}_{-i}$ with evaluation matrix $\mathbf{A}_{\mathfrak{P}_i\times\mathfrak{P}_{-i}}$, the corresponding empirical gamescape (EGS) for $\mathfrak{P}_i$ is defined as $$\mathcal{G}_{\mathfrak{P}_i|\mathfrak{P}_{-i}}:=\{ \text{convex mixtures of rows of } \mathbf{A}_{\mathfrak{P}_i\times\mathfrak{P}_{-i}}\}.$$
\end{definition}
\begin{table*}[]
\centering
\caption{Comparisons of Different Algorithms.}
\label{tab:comp}

\begin{tabular}{llllll}
\toprule Method      & Tool for Diversity                                     & BD & RD & Game Type                 &  \\
\midrule DvD         & Determinant                                            & $\checkmark$             & $\times$             & Single-agent              &  \\
 PSRO$_\text{N}$    & None                                                   & $\times$              & $\times$            & n-player general-sum game &  \\
 PSRO$_\text{rN}$   & $L_{1, 1}$ norm                                          & $\times$               & $\checkmark$         & 2-player zero-sum game    &  \\
 DPP-PSRO    & Determinantal point process                            & $\times$               & $\checkmark$        & 2-player general-sum game &  \\
 Our Methods & Occupancy measure \& convex hull & $\checkmark$             & $\checkmark$         & n-player general-sum game &  \\
\bottomrule
\end{tabular}
\end{table*}

\subsection{Existing Diversity Measures}
As the metric to measure the differences between models, diversity is an important topic in many fields of machine learning, including generative modelling~\cite{elfeki2019gdpp}, latent variable models~\cite{xie2016diversity}, and robotics~\cite{balch1998behavioral}. Specifically, in reinforcement learning (RL), diversity is a useful tool for learning transferable skills~\cite{eysenbach2018diversity}, boosting explorations~\cite{parker2020effective}, or collecting near-optimal policies that are distinct in a meaningful way. 
Despite the importance of diversity, as shown in Table~\ref{tab:comp}, there has not been a consistent definition of diversity for RL, and various diversity concepts are used. \cite{mckee2021quantifying} investigated \textbf{behavioral diversity} in multi-agent reinforcement learning through \textit{expected action variation}, which is modeled as the average total variation distance of two action distributions under certain sampled states. Considering the geometric perspective that the determinant of the kernel matrix
represents the volume of a parallelepiped spanned by feature maps, DvD~\cite{pacchiano2020effective} proposed the concept of \textbf{population diversity} using the determinant of the kernel matrix composed by the behavioral embeddings by multiple policies. Thanks to the tools of empirical game theory analysis, diversity can be modeled from the perspective of the empirical game. \textbf{Effective diversity}~\cite{balduzzi2019open} is formulated as the weighted $L_{1, 1}$ norm of the empirical payoff matrix, which emphasizes what opponents a policy can win against. Also inspired by determinantal point process (DPP)~\cite{kulesza2012determinantal,yang2020multi}, \cite{nieves2021modelling} uses the \textbf{expected cardinality} to measure the diversity of a population.

\section{A Unified Diversity Measure}\label{sec:3}
Motivated by bisimulation metric~\cite{ferns2011bisimulation} to measure the similarity of two states in MDPs: $d\left(\mathbf{s}_{i}, \mathbf{s}_{j}\right)=\max _{\mathbf{a} \in \mathcal{A}}(1-c) \cdot\left|\mathcal{R}_{\mathbf{s}_{i}}^{\mathbf{a}}-\mathcal{R}_{\mathbf{s}_{j}}^{\mathbf{a}}\right|+c \cdot W_{1}\left(\mathcal{P}_{\mathbf{s}_{i}}^{\mathbf{a}}, \mathcal{P}_{\mathbf{s}_{j}}^{\mathbf{a}} ; d\right)$, which considers both the immediate reward and the following transition dynamics, we want to build the metric to measure the similarity of two policies in a given Markov game through the task-specific reward attributes and the interaction between policy behaviors and transition dynamics. We will firstly model the interaction between policy behaviors and transition dynamics through the principled occupancy measure in MDPs, which encodes how a policy behaves in a given state and how the state will transit. On the reward side, the interaction responses with different opponents feature a policy, which can be used for common diversity measures like DPP~\cite{nieves2021modelling} and rectified Nash~\cite{balduzzi2019open}.

\subsection{Behavioral Diversity via Occupancy Measure Mismatching}\label{sec:3.1}
One fundamental way to characterize a policy in MDPs is through the distribution of the state-action pair $(s, a)$. Formally, we define the occupancy measure in multi-agent learning as the distribution of the joint state-action distribution. It has been shown that there is a one-to-one correspondence between the joint policy $\pi$ and the occupancy measure $\rho_{\pi}$.
\begin{proposition}[Theorem 2 of~\cite{syed2008apprenticeship}]
If $\rho$ is valid occupancy measure, then $\rho$ is the occupancy measure for $\pi_{\rho}(a \mid s) = \rho(s, a) / \sum_{a^{\prime}} \rho\left(s, a^{\prime}\right)$, and $\pi_{\rho}$ is the only policy whose occupancy measure is $\rho$. 
\end{proposition}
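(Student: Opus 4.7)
The plan is first to make the hypothesis ``valid occupancy measure'' concrete via the Bellman flow constraint $\sum_{a}\rho(s,a) = (1-\gamma)\eta(s) + \gamma\sum_{s',a'} P(s\mid s',a')\,\rho(s',a')$ together with $\rho\geq 0$, since this is the standard characterization of occupancy measures for discounted MDPs that Theorem~2 of Syed et al.\ relies on. With this in hand, the statement splits into an existence claim (the candidate $\pi_\rho$ really does realize $\rho$) and a uniqueness claim (no other policy can).

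For existence, I would set $\rho(s) := \sum_a \rho(s,a)$, define $\pi_\rho(a\mid s) := \rho(s,a)/\rho(s)$ on $\{s:\rho(s)>0\}$ (arbitrarily elsewhere), and let $\tilde\rho$ denote the occupancy measure actually induced by rolling out $\pi_\rho$ from $\eta$. By construction $\tilde\rho$ solves the fixed-point equation $\tilde\rho(s) = (1-\gamma)\eta(s) + \gamma\sum_{s',a'} P(s\mid s',a')\,\tilde\rho(s')\pi_\rho(a'\mid s')$. Substituting $\tilde\rho(s')\pi_\rho(a'\mid s') = \tilde\rho(s')\rho(s',a')/\rho(s')$ on the set $\{\rho(s')>0\}$ and using the hypothesized flow equation for $\rho$ shows that $\rho(s)$ satisfies the same linear system. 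The operator $\gamma T_{\pi_\rho}$ is an $\ell_1$-contraction of modulus $\gamma<1$, so $I - \gamma T_{\pi_\rho}$ is invertible and the fixed point of this system is unique; hence $\tilde\rho(s) = \rho(s)$ and consequently $\tilde\rho(s,a) = \tilde\rho(s)\pi_\rho(a\mid s) = \rho(s,a)$.

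For uniqueness of the policy, I would argue that if $\pi'$ is any policy with occupancy measure $\rho$, then at every state $s$ with $\rho(s)>0$ one has $\pi'(a\mid s) = \rho_{\pi'}(s,a)/\rho_{\pi'}(s) = \rho(s,a)/\rho(s) = \pi_\rho(a\mid s)$; on the unreachable set $\{\rho(s)=0\}$ the conditional distribution is irrelevant to the occupancy measure, so $\pi'$ is indistinguishable from $\pi_\rho$ as far as $\rho$ is concerned. The main obstacle I anticipate is the fixed-point uniqueness step, i.e.\ making precise that the state-marginal Bellman flow has a unique solution once the policy is fixed; this ultimately rests on the contraction property of $\gamma T_{\pi_\rho}$, while the remaining steps amount to bookkeeping with conditional probabilities and a single algebraic substitution.
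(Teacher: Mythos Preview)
The paper does not actually prove this proposition; it is quoted verbatim as Theorem~2 of Syed et al.\ (2008) and used as a black box, so there is no ``paper's own proof'' to compare against. Your proposal is a correct and self-contained argument along the standard lines of that reference: characterize valid occupancy measures by the Bellman flow constraint, recover the unique state-marginal via the $\gamma$-contraction of the policy-induced transition operator, and read off the conditional $\pi_\rho(a\mid s)$ from the joint on reachable states. The only cosmetic point is that ``uniqueness of the policy'' should be understood modulo behavior on the unreachable set $\{\rho(s)=0\}$, which you already flag; Syed et al.\ make the same caveat. If anything, you have supplied more detail than the paper does, since the paper simply invokes the citation.
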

Usually, the policy $\pi$ is parameterized as a neural network, and tackling the policy in the parameter space is intractable. However, due to the one-to-one correspondence between the policy and occupancy measure, the occupancy measure $\rho_{\pi}$ serves as a unique and informative representation for the policy $\pi$. Therefore, we are justified in considering diversity from a perspective of the occupancy measure.

Next, we will consider how to promote diversity in the framework of policy space response oracle. Suppose after $t$ iterations of PSRO, the joint policy aggregated according the distribution of nash is $\bm{\pi_{E}}=(\pi_{i}, \pi_{E_{-i}})$. The occupancy measure is given by $\rho_{\bm{\pi_{E}}}$. For player $i$ in the $t+1$ iteration, it will seek the new policy $\pi_{i}^{\prime}$, which can maximize the discrepancy between old $\rho_{\bm{\pi_{E}}}$ and $\rho_{\pi_{i}^{\prime}, \pi_{E_{-i}}}$.
\begin{align}
\max_{\pi_{i}^{\prime}}\operatorname{Div_{occ}}(\pi_{i}^{\prime}) = D_{f}(\rho_{\pi_{i}^{\prime}, \pi_{E_{-i}}}||\rho_{\pi_i, \pi_{E_{-i}}})~,
\end{align}
where we use the general $f$-divergence to measure the discrepancy of two distributions.

We firstly investigate the objective under the one-step game by giving the following theorem:
\begin{theorem}\label{theorem:1}
By assuming the game is a one-step game (normal-form games, functional-form games, etc.) and policies among players are independent, the behavioral diversity can be simplified by:
\begin{align}
    &D_{f}(\rho_{\pi_{i}^{\prime}, \pi_{E_{-i}}}||\rho_{\pi_i, \pi_{E_{-i}}}) = \mathbb{E}_{s_0\sim \eta(s)}[D_{f}(\pi_{i}^{\prime}(\cdot|s_0)||\pi_{i}(\cdot|s_0))]~.
\end{align}
\end{theorem}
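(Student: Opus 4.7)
The plan is to exploit the one-step assumption together with independence of the players' policies to collapse the joint $f$-divergence on state-action pairs into an expected per-state $f$-divergence between $\pi_i'$ and $\pi_i$. Because the game has a single step, only the $t=0$ term survives in Definition~1, so the state occupancy reduces (up to an overall constant) to the initial distribution, $\rho_{\bm{\pi}}(s)=\eta(s)$, independently of $\bm{\pi}$. Combined with the independence assumption $\bm{\pi}(\mathbf{a}|s)=\pi_i(a_i|s)\,\pi_{-i}(a_{-i}|s)$, this yields the product factorization $\rho_{\bm{\pi}}(s,\mathbf{a})=\eta(s)\,\pi_i(a_i|s)\,\pi_{-i}(a_{-i}|s)$, which is the key structural input for what follows.

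First, I would substitute this factorization into both occupancy measures appearing in the theorem and into the general $f$-divergence formula $D_f(p\|q)=\sum_{x}q(x)\,f\!\left(p(x)/q(x)\right)$. The central observation is that in the likelihood ratio $\rho_{\pi_i',\pi_{E_{-i}}}/\rho_{\pi_i,\pi_{E_{-i}}}$ the common factors $\eta(s)$ and $\pi_{E_{-i}}(a_{-i}|s)$ cancel, leaving only $\pi_i'(a_i|s)/\pi_i(a_i|s)$, which depends neither on transitions (there are none) nor on $a_{-i}$. Because the $f$-integrand is therefore a function only of $s$ and $a_i$, the outer sum over $a_{-i}$ of $\pi_{E_{-i}}(a_{-i}|s)$ collapses to $1$, and what remains is $\sum_{s}\eta(s)\sum_{a_i}\pi_i(a_i|s)\,f\!\left(\pi_i'(a_i|s)/\pi_i(a_i|s)\right)$, which is exactly $\mathbb{E}_{s_0\sim\eta(s)}[D_f(\pi_i'(\cdot|s_0)\,\|\,\pi_i(\cdot|s_0))]$ by the definition of $f$-divergence.

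There is no genuinely hard step here; the statement is essentially a bookkeeping exercise recording how the trajectory-level divergence collapses when the game has a single step and players act independently. The only minor subtlety is the treatment of the $(1-\gamma)$ prefactor in Definition~1: it appears symmetrically in both $\rho_{\pi_i',\pi_{E_{-i}}}$ and $\rho_{\pi_i,\pi_{E_{-i}}}$, so it cancels inside $f$, and treating $\rho$ as a probability measure on the single time slice matches the normalization implicit in the theorem's right-hand side. This point is worth a brief remark but poses no real obstacle to the argument.
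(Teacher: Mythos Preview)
Your proposal is correct and rests on the same core observation as the paper's proof: the shared factors in the two occupancy measures cancel inside the likelihood ratio, so the $f$-divergence collapses to one over $\pi_i'$ versus $\pi_i$ alone. The paper packages this cancellation as a cited lemma (if $P_{X,Y}=P_XP_{Y|X}$ and $Q_{X,Y}=Q_XP_{Y|X}$ then $D_f(P_{X,Y}\|Q_{X,Y})=D_f(P_X\|Q_X)$) and then assumes a single fixed state, which trivializes the expectation over $s_0$. You instead carry out the computation directly and keep an arbitrary initial distribution $\eta$, which is slightly more general and more faithful to the form of the stated identity; your explicit treatment of the $(1-\gamma)$ prefactor is also a nice touch that the paper omits. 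Substantively, though, the two arguments are the same idea expressed at different levels of abstraction.
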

\begin{proof}
See Appendix~\ref{app:1}.
\end{proof}

For more general Markov games, computing the exact occupancy measure is intractable. However, notice that we are maximizing a $f$-divergence objective of occupancy measures, while occupancy measure matching algorithms in imitation learning try to minimize the same objective~\cite{ho2016generative, ghasemipour2020divergence, fu2017learning}. Leveraging the powerful tool from occupancy measure matching, we here propose an approximate method to maximize $\operatorname{Div_{occ}}$. 

\textbf{Prediction Error for Approximate Optimization.}
Inspired by random expert distillation~\cite{wang2019random}, a neural network $f_{\hat{\theta}}(s, \mathbf{a})$ is trained to fit a randomly initialized fixed network $f_{\theta}(s, \mathbf{a})$ on the dataset of state-action pair $(s, \mathbf{a})\sim \rho_{\bm{\pi_{E}}}$. Then we can assign an intrinsic reward $r_i^{int}(s, \mathbf{a}) = ||f_{\hat{\theta}}(s, \mathbf{a})-f_{\theta}(s, \mathbf{a})||$ to the player, which will encourage the agent to visit the state-action with large prediction errors, thus pushing occupancy measure of the new policy to be different from the old one.

\textbf{Alternative Solutions.}
There are also many other practical occupancy measure matching algorithms. One popular paradigm is learning a discriminator $D(s, \mathbf{a})$ to classify the state-action pair $(s, \mathbf{a})$ from the distribution $\rho_{\pi_i^{\prime}, \pi_{E_{-i}}}$ and the distribution $\rho_{\bm{\pi_{E}}}$. Then the trained $D(s, \mathbf{a})$ can be used to construct different intrinsic rewards, which will correspond to different choices of $f$-divergence~\cite{ho2016generative, ghasemipour2020divergence, fu2017learning}. One major drawback of this paradigm is that the discriminator depends on the new policy $\pi_i^{\prime}$ and needs re-training once the policy $\pi_i^{\prime}$ is updated. Another popular paradigm is to learn an intrinsic reward directly from the target distribution $\rho_{\bm{\pi_E}}$ like the prediction error. Besides using the prediction error, there are also other choices, including energy-based model (EBM) ~\cite{liu2020energy} and expert variance~\cite{brantley2019disagreement}. However, those methods usually require specialized training techniques.

\subsection{Response Diversity via Convex Hull Enlargement}
Take the two-player game for an example. In games with more than two players, one can simply denote players other than player $i$ as player $-i$. Thanks to the empirical payoff matrix, another fundamental way to characterize the diversity of a new policy is through the reward outcome from the interaction with many different opponents. Each row in the empirical payoff matrix embeds how the corresponding row policy behaves against different opponent policies. We are therefore justified in using the row vector of the empirical payoff matrix to represent the corresponding row policy.

Formally, assume the row player maintains a pool of policies $\mathfrak{P}_{i} = \{\pi_{i}^{1}, \dots, \pi_{i}^{M}\}$ and the column player maintains a pool of policies $\mathfrak{P}_{-i} = \{\pi_{-i}^{1}, \dots, \pi_{-i}^{N}\}$. The induced $(k, j)$ entry in the empirical payoff matrix $\mathbf{A}_{\mathfrak{P}_{i}\times \mathfrak{P}_{-i}}$ is given by $\phi_{i}(\pi_{i}^{k}, \pi_{-i}^{j})$, where the function $\phi_{i}$ encapsulates the reward outcome for player $i$ given the joint policy $(\pi_i^k, \pi_{-i}^j)$. Now we can define the diversity measure induced by the reward representations:
\begin{align}
    \operatorname{Div_{rew}}(\pi_i^{M+1}) = D(\mathbf{a}_{M+1}||\mathbf{A}_{\mathfrak{P}_{i}\times\mathfrak{P}_{-i}})\\ \mathbf{a}_{M+1}^{\top} := (\phi_i(\pi_{i}^{M+1}, \pi_{-i}^{j}))_{j=1}^{N}~.
\end{align}
$D(\mathbf{a}_{M+1}||\mathbf{A}_{\mathfrak{P}_{i}\times\mathfrak{P}_{-i}})$ essentially measures the diversity of the new vector $\mathbf{a}_{M+1}$ given the presence of row vectors in $\mathbf{A}_{\mathfrak{P}_{i}\times\mathfrak{P}_{-i}}$.

Inspired by the intuition of the convex hull that indicates the representational capacity of a pool of policies, the diverse new policy should seek to enlarge the convex hull of reward vectors as large as possible. To characterize the contribution of a vector to the enlargement of the convex hull directly, we define the novel diversity measure as a form of Euclidean projection:
\begin{align}
    \operatorname{Div_{rew}}(\pi_i^{M+1})&=\min_{\mathbf{1}^{\top}\bm{\beta}=1\atop \bm{\beta}\ge0}||\mathbf{A}_{\mathfrak{P}_i\times \mathfrak{P}_{-i}}^{\top}\bm{\beta}-\mathbf{a}_{n+1}||_2^2~.
\end{align}
Unfortunately, there is no closed-form solution to this optimization problem. To facilitate the optimization, we propose a practical and differential lower bound:
\begin{theorem}\label{theorem:2}
For a given empirical payoff matrix $\mathbf{A}$ and the reward vector $\mathbf{a}_{n+1}$, the lower bound of $\operatorname{Div_{occ}}$ is given by:
\begin{align}
    \operatorname{Div_{rew}}(\pi_i^{M+1})\ge&
\frac{\sigma_{\min}^{2}(\mathbf{A})(1-\mathbf{1}^{\top}(\mathbf{A}^{\top})^{\dagger}\mathbf{a}_{n+1})^{2}}{M} +||(\mathbf{I}-\mathbf{A}^{\top}(\mathbf{A}^{\top})^{\dagger})\mathbf{a}_{n+1}||^{2}~,
\end{align}
where $(\mathbf{A}^{\top})^{\dagger}$ is the Moore–Penrose pseudoinverse of $\mathbf{A}^{\top}$, and $\sigma_{\min}(\mathbf{A})$ is the minimum singular value of $\mathbf{A}$.
\end{theorem}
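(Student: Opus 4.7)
The plan is to relax the simplex constraint by dropping the non-negativity requirement (which only weakens the minimum and hence preserves the lower-bound direction), and then to split the residual via the orthogonal projector onto the column span of $\mathbf{A}^{\top}$. Concretely, let $P := \mathbf{A}^{\top}(\mathbf{A}^{\top})^{\dagger}$ be the orthogonal projector onto $\operatorname{range}(\mathbf{A}^{\top})$, and let $\bm{\beta}^{\star} := (\mathbf{A}^{\top})^{\dagger}\mathbf{a}_{M+1}$ be the minimum-norm least-squares solution. Since $\mathbf{A}^{\top}\bm{\beta}^{\star}=P\mathbf{a}_{M+1}$ lies in $\operatorname{range}(\mathbf{A}^{\top})$ while $(\mathbf{I}-P)\mathbf{a}_{M+1}$ is orthogonal to it, the Pythagorean identity gives
\begin{align*}
\|\mathbf{A}^{\top}\bm{\beta}-\mathbf{a}_{M+1}\|^{2} = \|\mathbf{A}^{\top}(\bm{\beta}-\bm{\beta}^{\star})\|^{2} + \|(\mathbf{I}-\mathbf{A}^{\top}(\mathbf{A}^{\top})^{\dagger})\mathbf{a}_{M+1}\|^{2},
\end{align*}
which immediately isolates the second summand of the stated lower bound as a quantity independent of $\bm{\beta}$.

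Next I would lower bound the first summand $\|\mathbf{A}^{\top}(\bm{\beta}-\bm{\beta}^{\star})\|^{2}$ subject only to the affine constraint $\mathbf{1}^{\top}\bm{\beta}=1$. The plan is a two-step chain: first apply the singular-value inequality $\|\mathbf{A}^{\top}x\|^{2}\geq \sigma_{\min}^{2}(\mathbf{A})\|x\|^{2}$ to push the bound onto $\|\bm{\beta}-\bm{\beta}^{\star}\|^{2}$, then apply Cauchy--Schwarz against $\mathbf{1}\in\mathbb{R}^{M}$ to exploit the affine constraint:
\begin{align*}
|1 - \mathbf{1}^{\top}(\mathbf{A}^{\top})^{\dagger}\mathbf{a}_{M+1}| = |\mathbf{1}^{\top}(\bm{\beta}-\bm{\beta}^{\star})| \leq \sqrt{M}\,\|\bm{\beta}-\bm{\beta}^{\star}\|.
\end{align*}
Squaring and dividing by $M$ yields $\|\bm{\beta}-\bm{\beta}^{\star}\|^{2}\geq (1-\mathbf{1}^{\top}(\mathbf{A}^{\top})^{\dagger}\mathbf{a}_{M+1})^{2}/M$, and combining with the singular-value step reproduces the first summand in the theorem. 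Assembling the two summands closes the argument.

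The main obstacle lies in the singular-value step: the inequality $\|\mathbf{A}^{\top}x\|^{2}\geq \sigma_{\min}^{2}(\mathbf{A})\|x\|^{2}$ is only informative (and only holds for all $x\in\mathbb{R}^{M}$) when $\mathbf{A}$ has full row rank, i.e.\ when the payoff vectors of the $M$ policies are linearly independent. I would state this mild genericity condition explicitly; otherwise $\sigma_{\min}(\mathbf{A})=0$ and the first summand collapses to zero, in which case the theorem still holds trivially through the residual term. A secondary point worth noting, but posing no real difficulty, is that removing $\bm{\beta}\geq 0$ relaxes the feasible set and therefore can only decrease the minimum, so the direction of the inequality is correctly preserved throughout.
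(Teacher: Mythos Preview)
Your proposal is correct and follows essentially the same route as the paper's proof: orthogonal decomposition via the projector $\mathbf{A}^{\top}(\mathbf{A}^{\top})^{\dagger}$, relaxation of the non-negativity constraint, the singular-value lower bound, and finally reduction to the distance from $(\mathbf{A}^{\top})^{\dagger}\mathbf{a}_{M+1}$ to the hyperplane $\{\mathbf{1}^{\top}\bm{\beta}=1\}$. The only cosmetic difference is that the paper computes that last distance exactly via a Lagrangian whereas you reach the same value through Cauchy--Schwarz against $\mathbf{1}$; your added remark on the full-row-rank caveat for the singular-value step is a helpful clarification the paper leaves implicit.
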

\begin{proof}
See Appendix~\ref{app:2}.
\end{proof}
Let $F(\pi_{i}^{M+1})$ be the right hand of the inequality. Then $F(\pi_{i}^{M+1})$ serves as a lower bound of $\operatorname{Div_{rew}}(\pi_i^{M+1})$. 

\section{A Unified Diverse Objective for Best Response}\label{sec:4}
Equipped with the unified diversity measure, we are ready to propose the diversity-aware response during each iteration of PSRO:
\begin{align}
    \arg \max_{\pi_i^{\prime}}\mathbb{E}_{s,\mathbf{a} \sim \rho_{\pi_{i}^{\prime}, \pi_{E_{-i}}}}[r(s, \mathbf{a})] + \lambda_1 \operatorname{Div_{occ}}(\pi_{i}^{\prime}) + \lambda_2 \operatorname{Div_{rew}}(\pi_{i}^{\prime})~.
\end{align}
If both $\lambda_1$ and $\lambda_2$ are $0$, then objective is a normal best response.
\subsection{On the Optimization of Diverse Regularizers}
\textbf{Discussions on Optimizing BD.}
As discussed in Section~\ref{sec:3.1}, diversity in the occupancy measure level is fully compatible with the reinforcement learning task since the agent can get an intrinsic reward $r_i^{int}(s, \mathbf{a})$ to indicate the novelty of a state-action pair $(s, \mathbf{a})$. Therefore, to optimize the first two items in the objective, we only need to add the original reward by the $\lambda_1$-weighted intrinsic reward. Another issue we need to address is to sample $(s, \mathbf{a})$ from the distribution $\rho_{\bm{\pi_{E}}}$, which has been mentioned in Section~\ref{sec:3}.
Since $\bm{\pi_{E}}$ is not a true policy but only a hypothetical policy aggregated according to the mixture $(\sigma_i, \sigma_{-i})$, sampling from $\rho_{\bm{\pi_{E}}}$ is equivalent to sampling from $\rho_{\pi_i^{k},\pi_i^{j}}$ with probability $\sigma_i^{k}\sigma_{-i}^{j}$.

\textbf{Discussions on Optimizing RD.}
Optimizing $\operatorname{Div_{rew}}$ is not so easy since it involves an inner minimization problem. Fortunately, we have derived a closed-form low-bound $F$, which can serve as a surrogate for the outer maximization.

Assume the policy $\pi_{i}^{\prime}$ is parameterized by $\theta$ as $\pi_i^{\prime}(\theta)$. Then the gradient of $F$ with respect to $\theta$ is given by:
\begin{equation*}
\begin{split}
 \frac{\partial F(\pi_i^{\prime}(\theta))}{\partial \theta}&=\frac{\partial \mathbf{a}_{n+1}}{\partial \theta}\frac{\partial F}{\partial \mathbf{a}_{n+1}}
 =\left(\frac{\partial \phi_i(\pi_{i}^{\prime}(\theta), \pi_{-i}^{1})}{\partial \theta},\dots, \frac{\partial \phi_i(\pi_i^{\prime}(\theta), \pi_{-i}^{N})}{\partial \theta}\right)\frac{\partial F}{\partial \mathbf{a}_{n+1}}~.
\end{split}
\end{equation*}

$\frac{\partial F}{\partial \mathbf{a}_{n+1}}$ controls weights of the policy gradient backpropagated from different opponents policies $\pi_{-i}$. For practical implement, we sample an opponent $j$ according to the absolute values of $\frac{\partial F}{\partial \mathbf{a}_{n+1}}$ and then train $\pi_i^{\prime}$ against the opponent $\pi_{-i}^{j}$ using gradient descent or ascent, which depends on the sign of the $j_{th}$ entry of $\frac{\partial F}{\partial \mathbf{a}_{n+1}}$.

\textbf{Joint Optimization.}
One issue worth our notice is that the update direction of $\operatorname{Div_{rew}}$ will heavily rely on the initialization of $\pi_i^{\prime}(\theta)$. A bad initialization of $\theta$ will make the response diversity tell $\pi_{i}^{\prime}$ to update toward worse rewards. Therefore, we propose to first optimize the first two items in the objective jointly and then optimize $\pi_{i}^{\prime}$ using $\operatorname{Div_{rew}}$. The final unified diverse response with gradient-based optimization is described in Algorithm ~\ref{alg:example}.
    \begin{algorithm}[t!]
      \caption{Gradient-based Optimization for Unified Diverse Response}
      \label{alg:example}
    \begin{algorithmic}[1] 
      \STATE {\bfseries Input:} population $\mathfrak{P}_{i}$ for each $i$, meta-game $\mathbf{A}_{\mathfrak{P_{i}}\times\mathfrak{P_{-i}}}$, state-action dataset $\{(s, \mathbf{a})\}$, weights $\lambda_1$ and $\lambda_2$
      \STATE $\sigma_i, \sigma_{-i}\leftarrow \text{Nash on } \mathbf{A}_{\mathfrak{P_{i}}\times\mathfrak{P_{-i}}}$
      \STATE $\bm{\pi_E}\leftarrow \text{Aggregate according to }\sigma_i, \sigma_{-i}$ 
      \STATE $r_i^{int}(s,\mathbf{a})$ $\leftarrow$ Train fixed reward from distribution $(s, \mathbf{a})\sim \rho_{\bm{\pi_E}}$ by EBM or prediction errors.
      \STATE $\theta^{*}\leftarrow$ Train $\pi_{i}^{\prime}(\theta)$ against fixed opponent policies $\pi_{E_{-i}}$ by single-agent RL algorithm with $r_i(s, \mathbf{a}) = r_i^{ext}(s, \mathbf{a})+\lambda_1 r_i^{int}(s, \mathbf{a})$, $r_i^{ext}$ is the original reward function.
      \STATE $\frac{\partial F}{\partial \mathbf{a}_{n+1}}\leftarrow$ Simulate the reward row vector $\mathbf{a}_{n+1}$ using new $\pi_{i}^{\prime}(\theta)$ and compute $\frac{\partial F}{\partial \mathbf{a}_{n+1}}$ analytically.
      \STATE $\hat{\theta}\leftarrow$ Train $\pi_i^{\prime}(\theta^{\star})$ against a new mixture distribution $\sigma_{-i} + \lambda_2\frac{\partial F}{\partial \mathbf{a}_{n+1}}$ of opponent policies.
      \STATE {\bfseries Output:} policy $\pi_i^{\prime}(\hat{\theta})$
    \end{algorithmic}
    \end{algorithm}
    
In addition to the gradient-based optimization, we also provide other kinds of optimization oracles suitable for different games. Pseudocodes can be found in Appendix~\ref{app:oracle}.

\subsection{Evaluation Metrics}
\textbf{Exploitability.} Exploitability~\cite{lanctot2017unified} measures the distance of a joint policy from the Nash equilibrium. It shows how much each player gains by deviating to their best responses:
\begin{align}
    \operatorname{Expl}(\mathbf{\bm{\pi}}) = \sum_{i=1}^{N}(\max_{\pi_{i}^{\prime}}\phi_{i}(\pi_i^{\prime}, \pi_{-i})-\phi_{i}(\pi_i, \pi_{-i}))~.
\end{align}
Therefore, the smaller exploitability means the joint policy is closer to the Nash equilibrium.

\textbf{Population Effectivity.} Note the limitation of exploitability is that it only measures how exploitable a single joint policy is. Therefore, to evaluate the effectiveness of a population, we first need to get an aggregated policy from a population, and we usually use the Nash aggregated policy output by PSRO. Since the Nash is computed over the meta game, which varies with the opponents, the aggregation may be sub-optimal and cannot be used to represent a population. Intuitively, the aggregation weights, and further, the evaluation of a population should not be determined by the population that a specific opponent holds. To address this issue, we propose a \textit{generalized opponent-free} concept of exploitability called Population Effectivity (PE) by looking for the optimal aggregation in the worst cases:
\begin{align}\label{eq:pe}
    \operatorname{PE}(\{\pi_{i}^{k}\}_{k=1}^{N}) = \min_{\pi_{-i}} \max_{\mathbf{1}^{\top}\bm{\alpha}=1\atop \alpha_i\ge0}\sum_{k=1}^{N} \alpha_k\phi_i(\pi_{i}^{k}, \pi_{-i})~.
\end{align}
PE is again a NE over a two-player zero-sum game, where the player owning the population optimizes towards an optimal aggregation denoted by $\bm{\alpha}$, while the opponent can search over the entire policy space. Next, we offer a simple example to further illustrate the limitations of exploitability and superiority of PE.
\begin{example}\label{example:1}
Consider the matrix game Rock-Scissor-Paper, where $\phi_1(\pi_1, \pi_2) = \pi_1^{\top}\mathbf{A}\pi_2$ and $\phi_2(\pi_2, \pi_1) = \pi_2^{\top}\mathbf{B}\pi_1$, $\pi_1\in\mathbb{R}^{3}$, $\pi_2\in\mathbb{R}^{3}$, $\mathbf{A}=\left[\begin{array}{ccc}0 & 1 & -1 \\ -1 & 0 & 1 \\ 1 & -1 & 0\end{array}\right]$, $\mathbf{B} = -\mathbf{A}^{\top}$. Suppose player $1$ holds the population $\mathfrak{P}_{1} = \{\left[\begin{array}{c}1 \\ 0 \\ 0\end{array}\right], \left[\begin{array}{c}0 \\ 1 \\ 0\end{array}\right], \left[\begin{array}{c}0 \\ 0 \\ 1\end{array}\right]\}$, i.e. $\{\text{Rock}, \text{Scissor}, \text{Paper}\}$ and $\mathfrak{P}_{2} = \{\left[\begin{array}{c}1 \\ 0 \\ 0\end{array}\right]\}$, i.e. $\{\text{Rock}\}$. Then the meta-game $\mathbf{A}_{\mathfrak{P_1}\times\mathfrak{P_2}} = \left[\begin{array}{c}0 \\ -1 \\ 1\end{array}\right]$. The nash aggregated joint policy $(\pi_1, \pi_2) = (\left[\begin{array}{c}0 \\ 0 \\ 1\end{array}\right], \left[\begin{array}{c}1 \\ 0 \\ 0\end{array}\right])$. Now we can compute $\operatorname{Expl}((\pi_1, \pi_2))$ as:
\begin{align}
    \operatorname{Expl}((\pi_1, \pi_2)) &= \max_{\pi_1^{\prime}}\phi_1(\pi_1^{\prime}, \pi_2) - \phi_1(\pi_1, \pi_2) + \max_{\pi_2^{\prime}}\phi_2(\pi_2^{\prime}, \pi_1) - \phi_2(\pi_2, \pi_1)\\
    \label{eq:second_line}&=\max_{\pi_1^{\prime}}\phi_1(\pi_1^{\prime}, \pi_2) + \max_{\pi_2^{\prime}}\phi_2(\pi_2^{\prime}, \pi_1)=2 .
\end{align}
Now notice that the contribution of player $1$ to the exploitability is $\max_{\pi_2^{\prime}}\phi_2(\pi_2^{\prime}, \pi_1)$, which equals $1$. However, it is not reasonable that player $1$ and $2$ have the same contribution to the exploitability since player $1$ has a perfect diverse policy set. Instead, if we use PE as the metric:
\begin{align}
    \operatorname{PE}(\mathfrak{P}_1) = 0~,
\end{align}
which justifies that player $1$ has already found a perfect population.
\end{example}

In the following theorem, we show that PE is a generalized notion of exploitability under certain conditions and has some desirable properties:
\begin{theorem}\label{theorem:3}
Population effectivity has the following properties:

P1. \textbf{Equivalence}:
If $N=1$ and the underlying game $\phi_i(\cdot, \cdot)$ is a symmetric two-player zero-sum game, PE is equivalent to exploitability.

P2. \textbf{Monotonicity}:
If there are two populations $\mathfrak{P}_i$, $\mathfrak{Q}_i$ and $\mathfrak{P}_i\subseteq\mathcal{Q}_i$, then $\operatorname{PE}(\mathfrak{P_i})\le \operatorname{PE}(\mathfrak{Q_i})$, while the relationship for exploitability of the Nash aggregated policies of $\mathfrak{P}_i$ and $\mathfrak{Q}_i$ may or may not hold.

P3. \textbf{Tractability}:
If the underlying game $\phi_i(\cdot, \cdot)$ is a matrix game, then computing PE is still solving a matrix game.


\end{theorem}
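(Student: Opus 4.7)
The plan is to treat the three properties independently, since each uses a quite different observation about the definition of PE. Overall, I expect the calculations to be light; the main work is setting up the right equivalences.

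For P1 (Equivalence), I would start by specializing the PE definition in Eq.~\eqref{eq:pe} to $N=1$. With a single policy $\pi_i^{1}$, the constraint $\mathbf{1}^{\top}\bm{\alpha}=1$, $\alpha_k\ge0$ forces $\alpha_1 = 1$, so $\operatorname{PE}(\{\pi_i^{1}\}) = \min_{\pi_{-i}} \phi_i(\pi_i^{1}, \pi_{-i})$. For a symmetric two‑player zero‑sum game I would use the identity $\phi_1(x,y) = -\phi_1(y,x)$ and $\phi_2(x,y) = -\phi_1(y,x)$, together with the fact that the value of the game is $0$ and $\phi_i(\pi_i^1, \pi_i^1)=0$. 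Substituting $\pi_1 = \pi_2 = \pi_i^1$ into the exploitability definition and simplifying yields $\operatorname{Expl}((\pi_i^1,\pi_i^1)) = 2\max_{\pi'}\phi_1(\pi',\pi_i^1) = -2\min_{\pi'}\phi_1(\pi_i^1,\pi') = -2\operatorname{PE}(\{\pi_i^1\})$, which I will present as the precise sense of ``equivalence'': the two metrics differ only by an affine transformation, so rankings over single policies are preserved.

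For P2 (Monotonicity), the key observation is that the inner maximization over $\bm{\alpha}$ is monotone in the support set. For any fixed opponent $\pi_{-i}$, every feasible aggregation $\bm{\alpha}$ over $\mathfrak{P}_i$ can be extended to a feasible aggregation over $\mathfrak{Q}_i$ by putting zero weight on the extra policies, yielding the same value. Hence $\max_{\bm{\alpha}} \sum_k \alpha_k \phi_i(\pi_i^k,\pi_{-i})$ is weakly larger over $\mathfrak{Q}_i$, and taking $\min_{\pi_{-i}}$ preserves the inequality, proving $\operatorname{PE}(\mathfrak{P}_i)\le\operatorname{PE}(\mathfrak{Q}_i)$. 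For the ``may or may not'' part on exploitability, I would exhibit a small construction—e.g., extending the Rock–Scissor–Paper population in Example~\ref{example:1} by duplicating a single pure strategy; the Nash aggregation over the enlarged meta game can shift toward the duplicated strategy, whose best response raises exploitability, whereas starting from the full diverse set and removing one strategy clearly lowers exploitability. A second toy meta game where enlargement instead lowers exploitability will establish that neither direction is forced.

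For P3 (Tractability), I will exploit bilinearity: in a matrix game $\phi_i(\pi_i^k,\pi_{-i}) = (\pi_i^k)^{\top} A\, \pi_{-i}$, so $\sum_k \alpha_k \phi_i(\pi_i^k,\pi_{-i}) = \bigl(\sum_k \alpha_k \pi_i^k\bigr)^{\top} A\, \pi_{-i}$. Stacking the population as a matrix $\Pi_i = [\pi_i^1,\dots,\pi_i^N]$ rewrites the PE problem as $\min_{\pi_{-i}}\max_{\bm{\alpha}} \bm{\alpha}^{\top}\Pi_i^{\top} A\, \pi_{-i}$, which is a two‑player zero‑sum matrix game with payoff matrix $\Pi_i^{\top} A$, where the PE‑maximizer plays mixed strategies $\bm{\alpha}$ over the population and the adversary plays mixed strategies $\pi_{-i}$ over the original action space; by the minimax theorem it admits an LP solution of the same order as the original game. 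I expect P1 to be the main obstacle: pinning down the exact notion of ``equivalence'' in the symmetric zero‑sum setting so that the affine relationship $\operatorname{Expl}=-2\operatorname{PE}$ is both meaningful and immediate from the definitions, while P2 and P3 are mostly bookkeeping plus a small counterexample.
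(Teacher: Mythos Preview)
Your treatment of P1, the PE-monotonicity half of P2, and P3 is correct and essentially identical to the paper's: the same affine relation $\operatorname{Expl}=-2\,\operatorname{PE}$ for a single policy in a symmetric zero-sum game (the paper also states it as $\operatorname{PE}=-\tfrac12\operatorname{Expl}$), the same zero-padding of $\bm{\alpha}$ to get monotonicity of the inner max and hence of PE, and the same stacking $\Pi_i^{\top}A$ (the paper writes it as $(\pi_i^{1},\dots,\pi_i^{N})^{\top}\mathbf{P}$) to reduce PE to a matrix game.

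The one genuine gap is your proposed counterexample for the exploitability half of P2. Duplicating a pure strategy cannot change anything: every Nash of the enlarged meta-game that puts weight on the duplicate is payoff-equivalent, in the underlying game, to a Nash that puts that weight on the original copy, so the Nash-aggregated policy---and hence exploitability---is unchanged. What is needed is a \emph{new} policy that becomes a best response in the meta-game while being strictly more exploitable in the full game than the previous Nash-aggregated policy. The paper follows that pattern rather than duplication: it starts from $\mathfrak{P}_1=\mathfrak{P}_2=\{[\tfrac12,\tfrac12,0]\}$ in RPS and enlarges both to include $[0,1,0]$. A minimal instance you could use instead: take $\mathfrak{P}_1=\{[\tfrac13,\tfrac13,\tfrac13]\}$ and $\mathfrak{P}_2=\{\text{Rock}\}$, so the Nash-aggregated $\pi_1$ is uniform and $\operatorname{Expl}=1$; enlarging only player~1 to $\mathfrak{Q}_1=\{[\tfrac13,\tfrac13,\tfrac13],\text{Paper}\}$ makes Paper the unique meta-game best response, the Nash-aggregated $\pi_1$ becomes Paper, and $\operatorname{Expl}$ rises to $2$.
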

\begin{proof}
See Appendix~\ref{app:3}.
\end{proof}

\section{Experiments}\label{sec:5}
To verify that our diversity-regularized best response algorithm can induce a diverse and less exploitable population, we compare our methods with state-of-the-art game solvers including Self-play~\cite{hernandez2019generalized}, PSRO~\cite{lanctot2017unified}, PSRO$_{rN}$~\cite{balduzzi2019open}, Pipeline-PSRO (P-PSRO)~\cite{mcaleer2020pipeline}, DPP-PSRO~\cite{nieves2021modelling}. In this section, we want to demonstrate the effectiveness of our method to tackle the non-transitivity of zero-sum games, which can be shown via higher PE, lower exploitability, and diverse behaviors. Beyond the simple games, we also have the results on the complex \textit{Google Research Football} game, and our methods can still work. In all the following experiments, we choose the appropriate diversity weights $\lambda_1$ and $\lambda_2$ by extensive hyper-parameter tuning. We also conduct ablation study by choosing different $\lambda_1$ and $\lambda_2$ in Appendix~\ref{app:ab}. The environment details are in Appendix~\ref{app:env}, and the hyper-parameter settings for each experiment are in Appendix~\ref{app:hyper}.

\textbf{Real-World Games}. \cite{czarnecki2020real} studies the properties of some complex real-world games, including AlphaStar and AlphaGO. We test our method on the empirical games generated through the process of solving these real-world games. In Figure~\ref{fig:cone}, we report the exploitabilities of different algorithms during solving the AlphaStar game, which contains the meta-payoffs for $888$ RL policies. We report values of exploitability and PE during the growth of the population in Figure~\ref{fig:cone} and Figure~\ref{fig:cone_pe}. The result shows that with our unified diversity regularizer, our methods achieve the smallest exploitability and largest population effectivity, while most baselines fail to recover the diverse strategies and are easily exploited.

\begin{figure*}[t!]
\label{exp:alphastar}
  \centering
\vspace{-20pt}
\begin{subfigure}[l]{.45\textwidth}
\includegraphics[width=1.\textwidth]{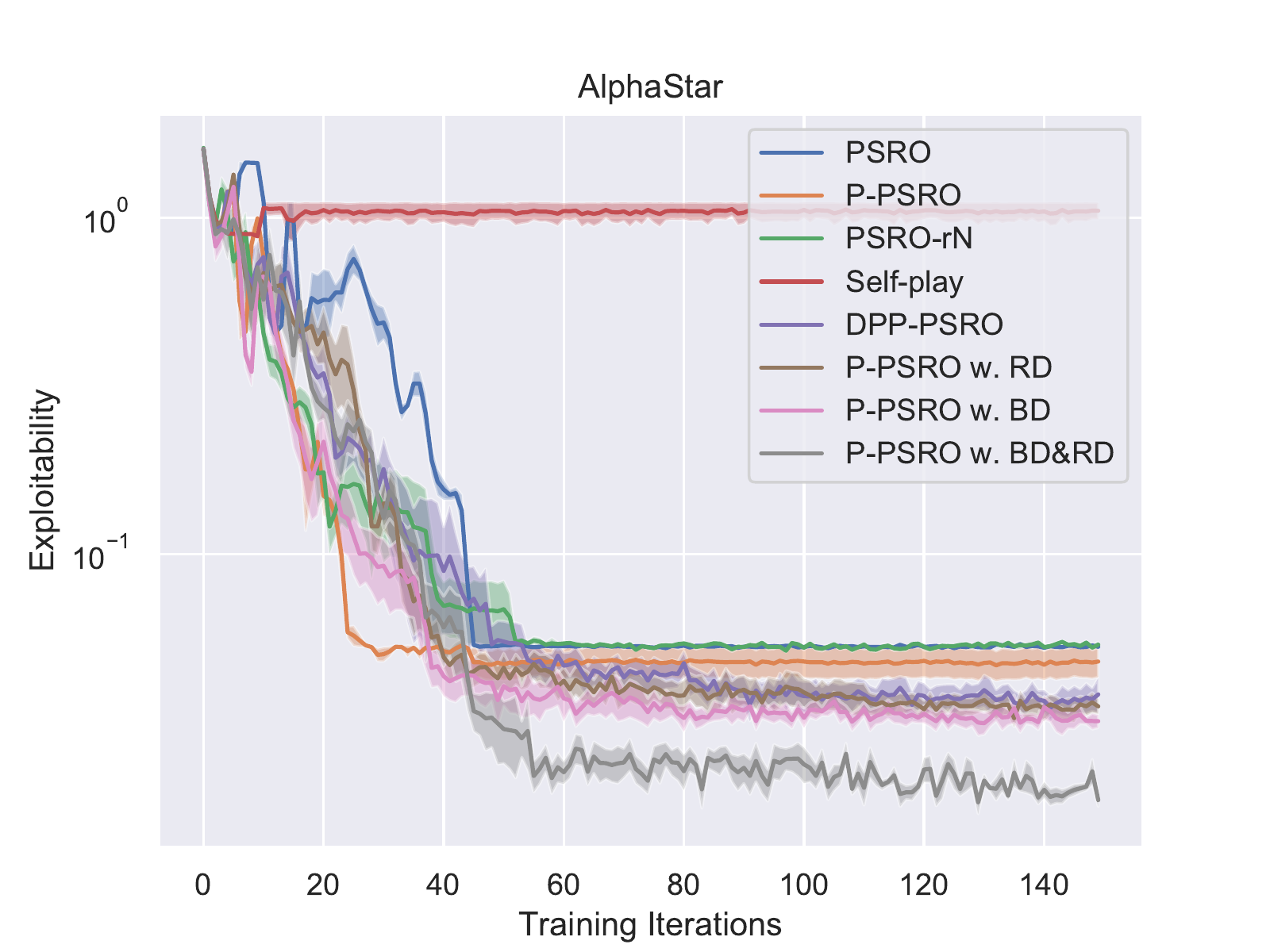}
\caption{}
  \label{fig:cone}
\end{subfigure}
\begin{subfigure}[l]{.45\textwidth}
   \includegraphics[width=1.\textwidth]{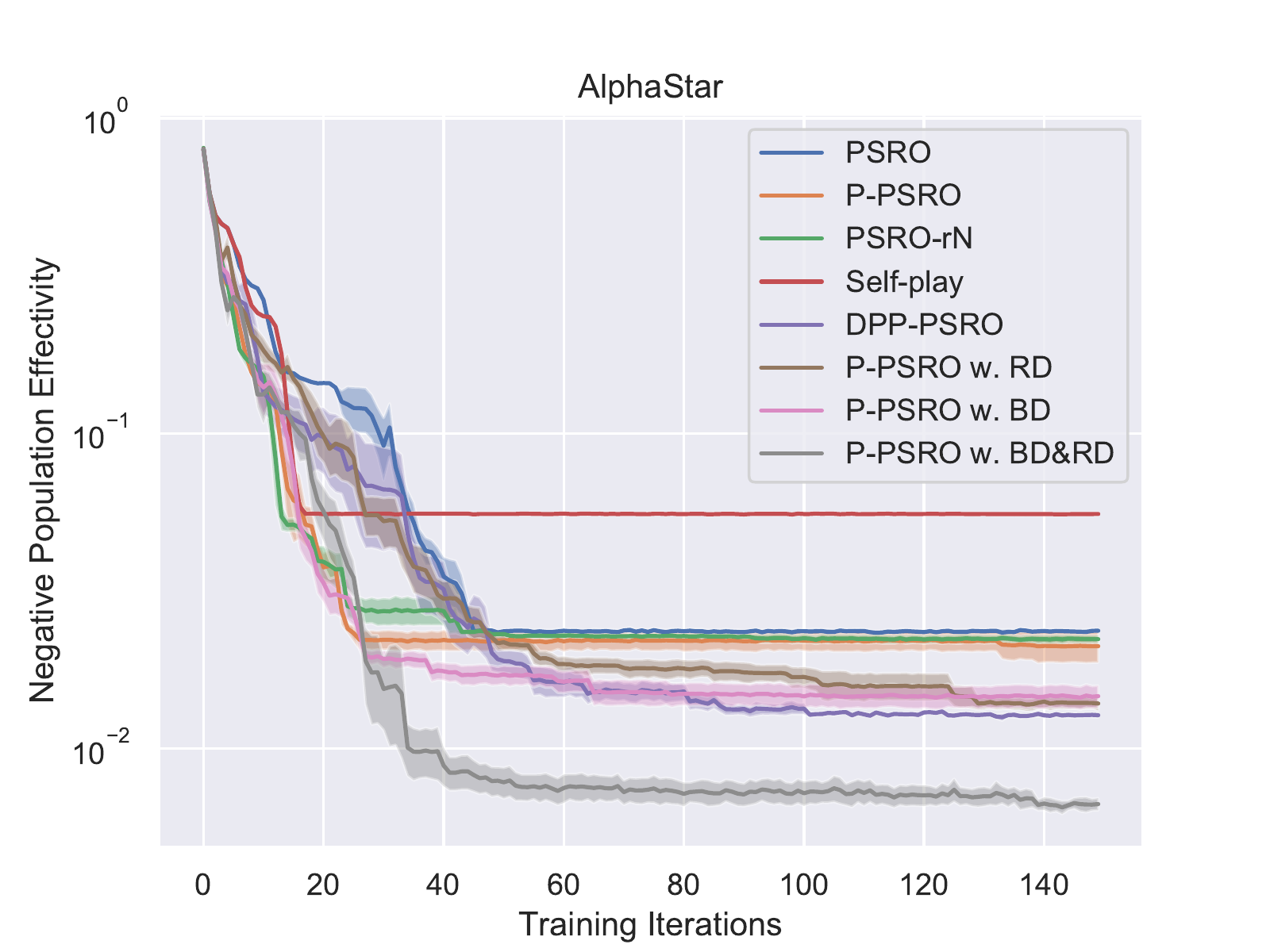}
\caption{}
  \label{fig:cone_pe}
\end{subfigure}
\caption{\textbf{(a)}:Exploitability \textit{vs.} training iterations (the number of times the optimization oracle is called) on the AlphaStar game. \textbf{(b)}: Negative Population Effectivity \textit{vs.} training iterations on the AlphaStar game. Ablation studies of P-PSRO only with BD or RD are also reported.}
\end{figure*}
\begin{figure}[t!]
\vspace{-10pt}
  \centering
  \includegraphics[width=1\textwidth, height=2.4cm]{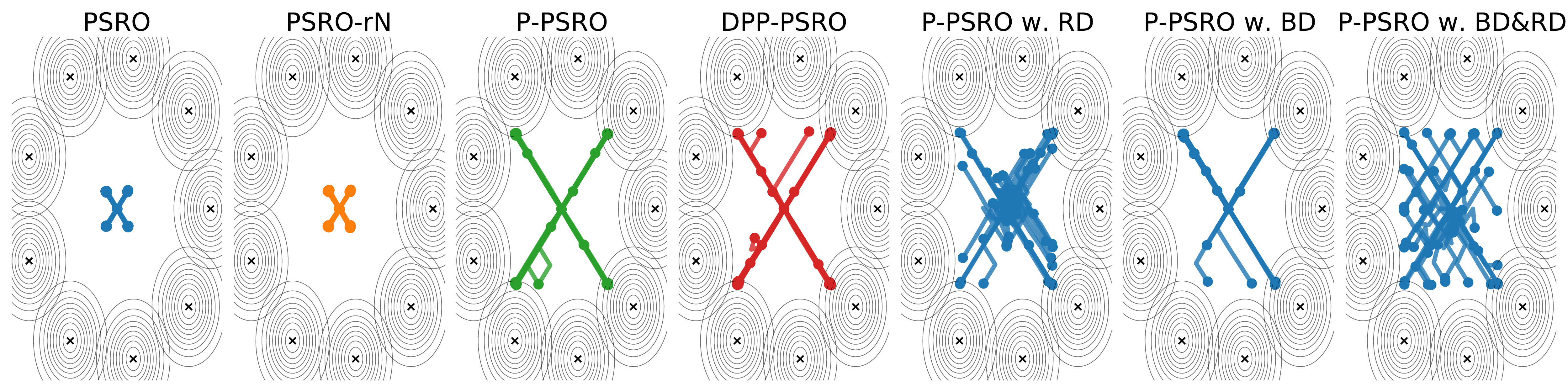}
  \caption{Exploration trajectories during training process on \textit{Non-Transitive Mixture Games}. 
  }
  \label{fig:traj}
 \vspace{-10pt}
\end{figure}

\textbf{Non-Transitive Mixture Games}. 
This zero-sum two-player game consists of $2l+1$ equally-distanced Gaussian humps on the 2D plane. Each player chooses a point in the 2D plane, which will be translated into a $(2l + 1)$-dimensional vector $\pi_i$ with each coordinate being the density in the corresponding Gaussian distribution. The payoff of the game is given by:$\phi_{i}(\pi_i, \pi_{-i}) = \pi_i^{\top} \mathbf{S} \pi_{-i} + \mathbf{1}^{\top}(\pi_i-\pi_{-i})$. According to the delicately designed $\mathbf{S}$, this game involves both the transitive component and non-transitive component, which means the optimal strategy should stay close to the center of the Gaussian and explore all the Gaussian distributions equally.

We firstly visualize the exploration trajectories during different algorithms solving the game in Figure~\ref{fig:traj}. It shows that the best response algorithm regularized by both BD and RD achieves the most diverse trajectories. Although our algorithm finds the most diverse trajectories, such superiority is not revealed by the metric of exploitability in the last row of Table~\ref{table:ep_expl}. On the other hand, we also report the PE values for the final population generated by different algorithms in Table~\ref{table:ep_expl}. It can found that our unified diversity regularizer can always help PSRO dominate other baselines in terms of population effectivity, which also justifies why PE is a better metric to evaluate diverse populations. The details of computing approximate PE using PSRO can be found in Appendix~\ref{app:pe}.

\begin{table}[t]
\vspace{-10pt}
\centering
\caption{PE$\times 10^2$ for populations generated by different methods when encountering opponents with varying strength on \textit{Non-transitive Mixture Games}. The OS (Opponent Strength) associated with the PE represents the strength of the opponent during the process of using PSRO to solve it. More details can be found in Appendix~\ref{app:pe}. We also report the Exploitability$\times 10^2$ for each population in last row.}
\label{table:ep_expl}
\vspace{5pt}
\resizebox{1.\textwidth}{!}{
\begin{tabular}{lrrrrrrr}
\toprule
PE(OS)       &    PSRO          & PSRO$_{rN}$       & P-PSRO        & DPP-PSRO      & P-PSRO w. RD  &     P-PSRO w. BD  & P-PSRO w. BD\&RD \\
\midrule PE($5$)  & $-2.11  \pm 0.13$ & $-2.11  \pm 0.14$  & $40.20 \pm 0.09$  & $40.49 \pm 0.07$  & $40.42 \pm 0.08$  & $40.19 \pm 0.10$  & $\mathbf{40.54 \pm 0.12}$   \\
 PE($10$) & $-13.18 \pm 0.28$ & $-13.18 \pm 0.28$ & $29.14 \pm 0.19$  & $29.45 \pm 0.13$  & $29.55 + 0.13$  & $29.05 + 0.21$  & $\mathbf{29.63 \pm 0.26}$    \\
 PE($15$) & $-31.17 \pm 0.37$ & $-31.17 \pm 0.37$ & $11.03 \pm 0.26$  & $11.49 \pm 0.21$  & $\mathbf{11.63 \pm 0.15}$  & $10.97 \pm 0.29$  & $11.57 \pm 0.33$    \\
 PE($20$) & $-49.12 \pm 0.23$ & $-49.12 \pm 0.24$ & $-6.78 \pm 0.14$  & $-6.41 \pm 0.10$  & $-6.52 \pm 0.10$  & $-7.03 \pm 0.21$  & $\mathbf{-6.37 \pm 0.24}$    \\
 PE($25$) & $-54.59 \pm 0.02$ & $-54.59 \pm 0.01$ & $-12.51 \pm 0.05$ & $-12.28 \pm 0.04$ & $-12.42 \pm 0.03$ & $-12.58 \pm 0.02$ & $\mathbf{-12.18 \pm 0.04}$ \\
 \midrule \midrule Expl&$54.66 \pm 0.06$ & $54.90  \pm 0.10$  & $\mathbf{13.21 \pm 0.29}$  & $13.24 \pm 0.33$  & $13.77 \pm 0.40$  & $41.132 \pm 1.06$  & $13.26 \pm 0.24$   \\
\bottomrule
\end{tabular}
}
\end{table}

\textbf{Google Research Football}. 
\begin{figure}[t!]
    \vspace{-8pt}
  \centering
  \includegraphics[width=.95\linewidth]{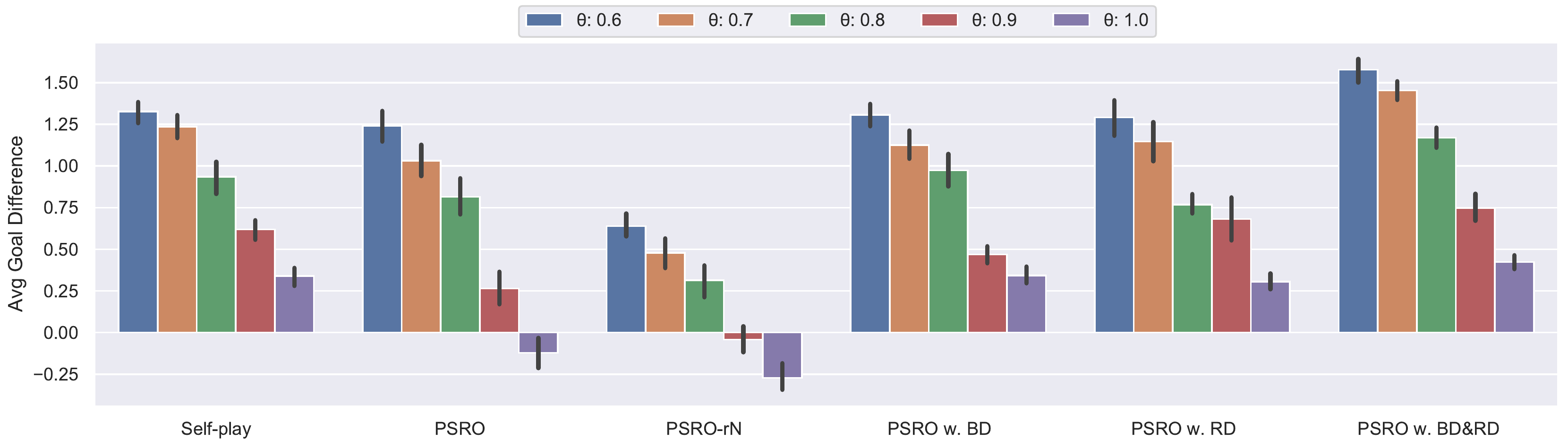}
  \caption{The average goal difference between all the methods and the built-in bots with various difficulty levels $\theta$ ($\theta \in \left[ 0, 1 \right]$ and larger $\theta$ means harder bot) on \textit{Google Research Football}.}
  \label{football}
   \vspace{-15pt}
\end{figure}
In addition to the experiments on relatively simple games, we also evaluate our methods on a challenging real-world game named \textit{Google Research Football (GRF)}~\cite{kurach2020google}. \textit{GRF} simulates a complete football game under standard rules with $11$ players on each team, and a normal match lasts for $3000$ steps. The enormous exploration spaces, the long-time horizon, and the sparse rewards problems in this game make it a challenging environment for modern reinforcement learning algorithms. In such complex scenarios, the exploitability of a certain policy or PE of a certain population would be hard to calculate because both metrics involve a $\max$ or $\min$ operator, and the approximate best response can be quite inaccurate for this complex game. Since our goal is to find robust policies with strong capabilities in real-world games, we compare 
the average goal differences between the aggregated policies of different methods and the built-in bots with various difficulty levels of \textit{GRF}. The models within each aggregated policies are trained for $300000$ steps under the generalized framework of Self-play \cite{hernandez2019generalized} by selecting opponents according to the probabilities output by different methods. 

As depicted in Figure~\ref{football}, policies trained by PSRO with both BD and RD achieve the largest goal differences when playing against the built-in bots. Moreover, they have an average of $\mathbf{60}$\% win-rate over other baseline methods (see the table in the Appendix~\ref{app:add}). We do not report the results of DPP-PSRO since it needs evolutionary updates and cannot scale to such a complex setting. We also abandon the pipeline trick for ease of implement since it does not affect the relative performance among algorithms. Additionally, the discussion of robustness of policies trained with different methods, the network architectures, the hyperparameters, and other detailed experimental setups can also be found in Appendix~\ref{app:add}.

\vspace{-5pt}
\section{Conclusions}
\vspace{-5pt}
\label{sec:6}
This paper investigated a new perspective on unifying diversity measures for open-ended learning in zero-sum games, which shapes an auto-curriculum to induce diverse yet effective behaviors. To this end, we decomposed the similarity measure of MDPs into behavioral and response diversity and showed the most of the existing diversity measures for RL can be concluded into one of the categories of them. We also provided the corresponding diversity-promoting objective and optimization methods, which consist of occupancy measure mismatching and convex hull enlargement. Finally, we proposed population effectivity to overcome the limitation of exploitability in measuring diverse policies for open-ended algorithms. Experimental results demonstrated our method is robust to both highly non-transitive games and complex games like the \textit{Google Research Football} environment.

\section*{Acknowledgments}
This work is supported by Shanghai Sailing Program (21YF1421900). The authors thank Minghuan Liu for many helpful discussions and suggestions.

\nocite{langley00}

\bibliography{references}
\bibliographystyle{plain}

\newpage
\appendix
\setcounter{page}{1}
\addcontentsline{toc}{section}{Appendix} 
\part{\Large{Appendix for "Unifying Behavioral and Response Diversity for Open-ended Learning in Zero-sum Games"}} 
\parttoc 

\section{Full Proof of Theorems}
\subsection{Proof of Theorem~\ref{theorem:1}}\label{app:1}
To prove Theorem~\ref{theorem:1}, we need the help of the following Lemma 
\begin{lemma}\label{lemma:1}
If $P_{X, Y} = P_{X}P_{Y|X}$ and $Q_{X, Y} = Q_{X}P_{Y|X}$ then 
\begin{align}
    D_{f}(P_{X, Y}|| Q_{X, Y}) = D_{f}(P_X||Q_X).
\end{align}
\end{lemma}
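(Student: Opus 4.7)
The plan is to prove the lemma directly from the definition of the $f$-divergence and exploit the fact that the shared conditional $P_{Y|X}$ cancels in the Radon–Nikodym ratio. Concretely, I would start by writing
\begin{equation*}
D_f(P_{X,Y}\,\|\,Q_{X,Y}) = \sum_{x,y} Q_{X,Y}(x,y)\, f\!\left(\frac{P_{X,Y}(x,y)}{Q_{X,Y}(x,y)}\right),
\end{equation*}
substituting the assumed factorizations $P_{X,Y}=P_X P_{Y|X}$ and $Q_{X,Y}=Q_X P_{Y|X}$. The ratio inside $f$ then becomes
\begin{equation*}
\frac{P_X(x)\,P_{Y|X}(y\mid x)}{Q_X(x)\,P_{Y|X}(y\mid x)} = \frac{P_X(x)}{Q_X(x)},
\end{equation*}
which no longer depends on $y$. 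Pulling this $y$-independent quantity out of the inner sum and using $\sum_y P_{Y|X}(y\mid x)=1$ collapses the double sum to $\sum_x Q_X(x)\, f(P_X(x)/Q_X(x)) = D_f(P_X\,\|\,Q_X)$.

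For full generality (non-discrete $Y$, or measure-theoretic setting), I would phrase the same manipulation using Radon–Nikodym derivatives: under mutual absolute continuity on the relevant supports, $dP_{X,Y}/dQ_{X,Y} = dP_X/dQ_X$ almost surely, and then Fubini/Tonelli lets us integrate out $Y$ against the common kernel $P_{Y|X}$ exactly as in the discrete case. A small care point is the standard convention $0\cdot f(0/0)=0$ and $q\cdot f(p/0)$ for $q>0$, which must be handled when $Q_X(x)=0$; however, in that case $Q_{X,Y}(x,\cdot)\equiv 0$ as well, so the corresponding terms on both sides contribute the same (by convention) and the identity is preserved.

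I do not anticipate a substantive obstacle: the only subtlety is bookkeeping around the support/absolute continuity conventions for $f$-divergences, and once the ratio $dP_{X,Y}/dQ_{X,Y}$ is shown to coincide with $dP_X/dQ_X$ on the support of $Q_{X,Y}$, the result follows in one line by integrating out $Y$. This lemma will then feed directly into Theorem~\ref{theorem:1} by taking $X=s_0$ and $Y$ to be the rest of the trajectory, since under a one-step game the only $\pi_i'$-dependent factor in the joint occupancy is $\pi_i'(\cdot\mid s_0)$ while the initial distribution $\eta$ and the opponent's conditional play the role of $P_{Y|X}$.
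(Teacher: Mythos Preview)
Your argument is correct and is the standard direct computation for this identity. The paper itself does not supply a proof at all: it simply cites Proposition~7.1 of Polyanskiy's lecture notes on $f$-divergences, so your self-contained derivation is strictly more explicit than what appears in the paper, and is precisely the argument one would find behind that citation.

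One small remark on your closing paragraph about feeding the lemma into Theorem~\ref{theorem:1}: the roles of $X$ and $Y$ are the reverse of what you sketch. The component that \emph{differs} between the two occupancy measures is $\pi_i'$ versus $\pi_i$, so $X$ must be player $i$'s action $a_i$ (given $s_0$), with the shared kernel $P_{Y\mid X}$ being the opponent's policy $\pi_{E_{-i}}$; the initial-state distribution $\eta$ is common to both measures and is what produces the outer expectation over $s_0$. This does not affect your proof of the lemma, only its instantiation downstream.
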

\begin{proof}
See Proposition 7.1 in [3].
\end{proof}
Now we can prove our Theorem~\ref{theorem:1}.
\begin{proof}
For games with only one step (normal-form games, functional-form games), there is only one fixed state. Therefore, the distribution of state-action is equivalent to the distribution of the action. Formally, for $\rho_{\pi_{i}^{\prime}, \pi_{E_{-i}}}$, we have:
\begin{align}
    \rho_{\pi_{i}^{\prime}, \pi_{E_{-i}}}(s, \mathbf{a}) = (\pi_{i}^{\prime}, \pi_{E_{-i}})(\mathbf{a}) = \pi_{i}^{\prime}(a_i)\pi_{-i}(a_{-i})~,
\end{align}
where the second equation comes from the assumption that policies are independent. Similarly, for $\rho_{\pi_i, \pi_{E_{-i}}}$, we also have:
\begin{align}
    \rho_{\pi_{i}, \pi_{E_{-i}}}(s, \mathbf{a}) = (\pi_{i}, \pi_{E_{-i}})(\mathbf{a}) = \pi_{i}(a_i)\pi_{-i}(a_{-i})~.
\end{align}
Therefore, with the help of Lemma~\ref{lemma:1}, we have:
\begin{align}
    D_{f}(\rho_{\pi_{i}^{\prime}, \pi_{E_{-i}}}||\rho_{\pi_i, \pi_{E_{-i}}}) = D_{f}(\pi_i^{\prime}\pi_{-i}||\pi_i\pi_{-i}) = D_{f}(\pi_i^{\prime}||\pi_{-i})~.
\end{align}
\end{proof}
\subsection{Proof of Theorem~\ref{theorem:2}}\label{app:2}
Let us restate our Theorem~\ref{theorem:2}
\setcounter{theorem}{1}
\begin{theorem}\label{theorem:new}
For a given empirical payoff matrix $\mathbf{A}\in \mathbb{R}^{M\times N}$ and the reward vector $\mathbf{a}_{M+1}$ for policy $\pi_{i}^{M+1}$, the lower bound of $\operatorname{Div_{occ}}$ is given by:
\begin{align}
    \operatorname{Div_{rew}}(\pi_i^{M+1})\ge&
\frac{\sigma_{\min}^{2}(\mathbf{A})(1-\mathbf{1}^{\top}(\mathbf{A}^{\top})^{\dagger}\mathbf{a}_{M+1})^{2}}{M} +||(\mathbf{I}-\mathbf{A}^{\top}(\mathbf{A}^{\top})^{\dagger})\mathbf{a}_{M+1}||^{2}~,
\end{align}
where $(\mathbf{A}^{\top})^{\dagger}$ is the Moore–Penrose pseudoinverse of $\mathbf{A}^{\top}$, and $\sigma_{\min}(\mathbf{A})$ is the minimum singular value of $\mathbf{A}$.
\end{theorem}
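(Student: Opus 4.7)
\textbf{Proof proposal for Theorem~\ref{theorem:2}.}
The plan is to relax the constrained least-squares problem defining $\operatorname{Div_{rew}}$, split the residual via an orthogonal decomposition, and then control the two resulting pieces separately. First I would drop the nonnegativity constraint $\bm{\beta}\ge 0$ from the definition
\[
\operatorname{Div_{rew}}(\pi_i^{M+1})=\min_{\mathbf{1}^{\top}\bm{\beta}=1,\ \bm{\beta}\ge 0}\|\mathbf{A}^{\top}\bm{\beta}-\mathbf{a}_{M+1}\|_2^2,
\]
since enlarging the feasible set can only decrease the minimum and therefore yields a valid lower bound on $\operatorname{Div_{rew}}$. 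It remains to lower-bound the equality-constrained problem $\min_{\mathbf{1}^{\top}\bm{\beta}=1}\|\mathbf{A}^{\top}\bm{\beta}-\mathbf{a}_{M+1}\|_2^2$.

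Next I would perform an orthogonal decomposition of $\mathbf{a}_{M+1}$ relative to the column space of $\mathbf{A}^{\top}$. Writing $P=\mathbf{A}^{\top}(\mathbf{A}^{\top})^{\dagger}$ for the orthogonal projector onto $\operatorname{Col}(\mathbf{A}^{\top})$, set $\mathbf{a}_{M+1}^{\parallel}=P\mathbf{a}_{M+1}$ and $\mathbf{a}_{M+1}^{\perp}=(\mathbf{I}-P)\mathbf{a}_{M+1}$. Since $\mathbf{A}^{\top}\bm{\beta}-\mathbf{a}_{M+1}^{\parallel}\in\operatorname{Col}(\mathbf{A}^{\top})$ is orthogonal to $\mathbf{a}_{M+1}^{\perp}$, Pythagoras gives
\[
\|\mathbf{A}^{\top}\bm{\beta}-\mathbf{a}_{M+1}\|_2^2
=\|\mathbf{A}^{\top}\bm{\beta}-\mathbf{a}_{M+1}^{\parallel}\|_2^2+\|\mathbf{a}_{M+1}^{\perp}\|_2^2.
\]
The second term is constant in $\bm{\beta}$ and already equals $\|(\mathbf{I}-\mathbf{A}^{\top}(\mathbf{A}^{\top})^{\dagger})\mathbf{a}_{M+1}\|^2$, matching the second summand in the claimed bound.

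For the first term I would observe that $\mathbf{a}_{M+1}^{\parallel}=\mathbf{A}^{\top}\bm{\beta}^{\star}$ with $\bm{\beta}^{\star}=(\mathbf{A}^{\top})^{\dagger}\mathbf{a}_{M+1}$, so with $\mathbf{v}=\bm{\beta}-\bm{\beta}^{\star}$ the problem becomes
\[
\min_{\mathbf{1}^{\top}\mathbf{v}=1-\mathbf{1}^{\top}(\mathbf{A}^{\top})^{\dagger}\mathbf{a}_{M+1}}\|\mathbf{A}^{\top}\mathbf{v}\|_2^2.
\]
Two elementary inequalities now finish the job: the singular-value bound $\|\mathbf{A}^{\top}\mathbf{v}\|_2^2=\mathbf{v}^{\top}\mathbf{A}\mathbf{A}^{\top}\mathbf{v}\ge \sigma_{\min}^2(\mathbf{A})\|\mathbf{v}\|_2^2$ (which is where one implicitly uses that $\mathbf{A}$ has full row rank so that the smallest eigenvalue of $\mathbf{A}\mathbf{A}^{\top}$ is $\sigma_{\min}^2(\mathbf{A})$), and Cauchy--Schwarz $|\mathbf{1}^{\top}\mathbf{v}|^2\le \|\mathbf{1}\|_2^2\|\mathbf{v}\|_2^2=M\|\mathbf{v}\|_2^2$, so that $\|\mathbf{v}\|_2^2\ge (1-\mathbf{1}^{\top}(\mathbf{A}^{\top})^{\dagger}\mathbf{a}_{M+1})^2/M$. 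Chaining these yields the first summand, and adding $\|\mathbf{a}_{M+1}^{\perp}\|^2$ gives the stated lower bound.

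The main obstacles I anticipate are technical rather than conceptual. The first is the implicit rank assumption on $\mathbf{A}$: the singular-value bound on $\|\mathbf{A}^{\top}\mathbf{v}\|$ only holds globally in $\mathbf{v}$ when $\mathbf{A}\mathbf{A}^{\top}$ is positive definite, which corresponds to the empirical payoff matrix having full row rank, and I would either state this as a standing assumption or restrict $\sigma_{\min}(\mathbf{A})$ to the smallest nonzero singular value after projecting $\mathbf{v}$ onto $\operatorname{Row}(\mathbf{A})$ (the latter is automatic since $\bm{\beta}^{\star}$ already lies in that subspace). The second subtlety is confirming that dropping the nonnegativity constraint is the right relaxation to reach a closed form; this is the source of looseness in the bound, but it is needed to obtain a differentiable surrogate usable in the gradient-based oracle of Algorithm~\ref{alg:example}.
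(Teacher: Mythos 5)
Your proposal is correct and follows essentially the same route as the paper's proof: the same orthogonal decomposition of $\mathbf{a}_{M+1}$ via the projector $\mathbf{A}^{\top}(\mathbf{A}^{\top})^{\dagger}$, the same relaxation dropping $\bm{\beta}\ge 0$, and the same singular-value bound, with your Cauchy--Schwarz step computing exactly the point-to-hyperplane distance that the paper obtains via a Lagrangian (the order in which you relax and decompose is immaterial). The full-row-rank caveat you raise for the step $\|\mathbf{A}^{\top}\mathbf{v}\|^{2}\ge\sigma_{\min}^{2}(\mathbf{A})\|\mathbf{v}\|^{2}$ is a genuine implicit assumption that the paper's proof also leaves unstated.
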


\begin{proof}
\begin{equation*}
\begin{split}
    \min_{\mathbf{1}^{\top}\bm{\beta}=1 \atop \bm{\beta}\ge0}||\mathbf{A}^{\top}\bm{\beta}-\mathbf{a}_{M+1}||_2^2
    =&\min_{\mathbf{1}^{\top}\bm{\beta}=1\atop \bm{\beta}\ge0}||\mathbf{A}^{\top}\bm{\beta}-\mathbf{A}^{\top}(\mathbf{A}^{\top})^{\dagger}\mathbf{a}_{M+1}||^{2}
    +||(\mathbf{I}-\mathbf{A}^{\top}(\mathbf{A}^{\top})^{\dagger})\mathbf{a}_{M+1}||^{2}\\
    \ge &\min_{\mathbf{1}^{\top}\bm{\beta}=1}||\mathbf{A}^{\top}\bm{\beta}-\mathbf{A}^{\top}(\mathbf{A}^{\top})^\dagger\mathbf{a}_{M+1}||^{2}
    +||(\mathbf{I}-\mathbf{A}^{\top}(\mathbf{A}^{\top})^\dagger)\mathbf{a}_{M+1}||^{2}\\
    = &\min_{\mathbf{1}^{\top}\bm{\beta}=1}||\mathbf{A}^{\top}(\bm{\beta}-(\mathbf{A}^{\top})^\dagger\mathbf{a}_{M+1})||^{2}
    +||(\mathbf{I}-\mathbf{A}^{\top}(\mathbf{A}^{\top})^\dagger)\mathbf{a}_{M+1}||^{2}\\
    \ge &\sigma_{\min}^{2}(\mathbf{A})\min_{\mathbf{1}^{\top}\bm{\beta}=1}||\bm{\beta}-(\mathbf{A}^{\top})^\dagger\mathbf{a}_{M+1}||^{2}
    +||(\mathbf{I}-\mathbf{A}^{\top}(\mathbf{A}^{\top})^\dagger)\mathbf{a}_{M+1}||^{2}\\
    =&\frac{\sigma_{\min}^{2}(\mathbf{A})(1-\mathbf{1}^{\top}(\mathbf{A}^{\top})^\dagger\mathbf{a}_{M+1})^{2}}{M}
    +||(\mathbf{I}-\mathbf{A}^{\top}(\mathbf{A}^{\top})^\dagger)\mathbf{a}_{M+1}||^{2}~,
\end{split}
\end{equation*}
where the first equation comes from that we decompose $\mathbf{a}_{M+1}$ into $\mathbf{A}^{\top}(\mathbf{A}^{\top})^{\dagger}\mathbf{a}_{M+1} + (\mathbf{I}-\mathbf{A}^{\top}(\mathbf{A}^{\top})^{\dagger})\mathbf{a}_{M+1}$. The last equation comes from the analytic calculation of $\min_{\mathbf{1}^{\top}\bm{\beta}=1}||\bm{\beta}-(\mathbf{A}^{\top})^\dagger\mathbf{a}_{M+1}||^{2}$ using Lagrangian.
\end{proof}
\subsection{Proof of Theorem~\ref{theorem:3}}\label{app:3}
Now let us first restate the propositions.

\setcounter{proposition}{0}
\begin{proposition}
If $N=1$ and the underlying game $\phi_i(\cdot, \cdot)$ is a symmetric two-player zero-sum game, PE is equivalent to exploitability.
\end{proposition}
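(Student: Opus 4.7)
The proof will be a direct unwinding of the two definitions in the symmetric two-player zero-sum setting, so the plan is to line up notation cleanly and then let a single antisymmetry identity do all the work. First, when $N=1$ the feasible set for $\bm{\alpha}$ in the PE definition is the single simplex vertex $\alpha_1 = 1$, so the outer maximization collapses and
\begin{equation*}
\operatorname{PE}(\{\pi\}) = \min_{\pi_{-i}} \phi_i(\pi, \pi_{-i}).
\end{equation*}
In the symmetric case, both players have the same payoff function when written with ``own policy first,'' so I drop the subscript and write $\phi_1 = \phi_2 = \phi$; the zero-sum property then takes the convenient form of the antisymmetry $\phi(a,b) = -\phi(b,a)$.

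Next I will expand exploitability at the natural symmetric profile $(\pi,\pi)$, which is the joint policy that a single-policy population instantiates. Setting $a = b = \pi$ in antisymmetry gives $\phi(\pi,\pi) = 0$, which kills the two subtracted terms in $\operatorname{Expl}$. By symmetry, the two players face identical best-response problems against $\pi$, so the two $\max$ terms coincide and
\begin{equation*}
\operatorname{Expl}((\pi,\pi)) = 2\,\max_{\pi'} \phi(\pi', \pi).
\end{equation*}
Applying antisymmetry once more gives $\phi(\pi,\pi') = -\phi(\pi',\pi)$, so $\min_{\pi'} \phi(\pi,\pi') = -\max_{\pi'}\phi(\pi',\pi)$, and hence
\begin{equation*}
\operatorname{PE}(\{\pi\}) = -\tfrac{1}{2}\,\operatorname{Expl}((\pi,\pi)).
\end{equation*}
This linear relation is what I will interpret as ``equivalence'': the two metrics carry the same information up to sign and a factor of two, they vanish together exactly when $\pi$ is a Nash strategy, and they induce the same ordering on single-policy populations.

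The plan contains no real technical obstacle; the computation is essentially two short lines glued by antisymmetry. The care required is conceptual rather than mechanical. First, I need to make explicit that ``equivalence'' here means equivalence up to an invertible affine transformation, since otherwise the factor $-2$ looks like a mismatch rather than a restatement. Second, I need to thread the argument-ordering convention of $\phi_i(\cdot,\cdot)$ carefully through the symmetry and zero-sum axioms, because writing $\phi_2$ with the opponent's policy in the wrong slot would flip a sign and break the cancellation that gives $\phi(\pi,\pi) = 0$. Once those two conventions are pinned down, the result drops out in a few lines.
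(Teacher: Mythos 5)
Your proposal is correct and follows essentially the same route as the paper's proof: both derive the antisymmetry $\phi(a,b)=-\phi(b,a)$ from the symmetric and zero-sum assumptions, use $\phi(\pi,\pi)=0$ to kill the subtracted terms, and arrive at $\operatorname{PE}(\{\pi\})=-\tfrac{1}{2}\operatorname{Expl}((\pi,\pi))$. Your explicit remark that ``equivalence'' must be read as equivalence up to an invertible affine map is a useful clarification that the paper leaves implicit.
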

To prove this, let us prove the following Lemma~\ref{lemma:2}.
\begin{lemma}\label{lemma:2}
For any policy $\pi$ in two-player symmetric zero-sum games:
\begin{align}
    \phi_i(\pi, \pi) = 0~.
\end{align}
\end{lemma}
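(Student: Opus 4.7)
The plan is to combine the two defining properties of a symmetric two-player zero-sum game—symmetry and the zero-sum condition—evaluated at the diagonal input $(\pi, \pi)$. Specifically, I would recall that the zero-sum property gives $\phi_1(\pi_1, \pi_2) + \phi_2(\pi_2, \pi_1) = 0$ for every pair of policies, while symmetry means that the game is invariant under relabeling of the two players, which formally amounts to $\phi_1(\pi_1, \pi_2) = \phi_2(\pi_1, \pi_2)$ (both players have identical payoff functions once we fix which argument is theirs and which is the opponent's).

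First I would specialize the zero-sum identity to $\pi_1 = \pi_2 = \pi$, yielding $\phi_1(\pi, \pi) + \phi_2(\pi, \pi) = 0$. Next I would invoke symmetry at the same input to conclude $\phi_1(\pi, \pi) = \phi_2(\pi, \pi)$. Combining the two gives $2\phi_i(\pi, \pi) = 0$, so $\phi_i(\pi, \pi) = 0$ for either $i \in \{1, 2\}$, which is exactly the claim.

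The only non-routine aspect is pinning down the precise formalization of "symmetric" that the paper is using, since different authors order the arguments of $\phi_i$ differently (some write $\phi_i(\pi_i, \pi_{-i})$ with the player's own policy first, in which case symmetry reads $\phi_1(x, y) = \phi_2(x, y)$; others write $\phi_i(\pi_1, \pi_2)$ with a fixed position ordering, where symmetry reads $\phi_1(x, y) = \phi_2(y, x)$). I would state the convention explicitly before using it so that the one-line derivation is unambiguous. Once that is settled, the proof is essentially a two-line calculation, and it will serve as the key building block for deducing Proposition~1 (Equivalence) by unfolding the $\min$-$\max$ in the definition of PE and cancelling the diagonal term $\phi_i(\pi_i, \pi_i)$ against the subtracted term in the exploitability expression.
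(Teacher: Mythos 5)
Your proof is correct and takes essentially the same route as the paper's: combine the symmetry and zero-sum identities and evaluate at the diagonal $\pi_1=\pi_2=\pi$ to obtain $2\phi_i(\pi,\pi)=0$. The paper merely derives the general antisymmetry $\phi_i(\pi_i,\pi_{-i})+\phi_i(\pi_{-i},\pi_i)=0$ before specializing to the diagonal, which is the same two-line calculation; your remark about fixing the argument-ordering convention for ``symmetric'' is a reasonable point of care but does not change the argument.
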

\begin{proof}
To begin with, due to the assumption that the game is symmetric, we get:
\begin{align}\label{eq:2}
    \phi_i(\pi_i, \pi_{-i}) = \phi_{-i}(\pi_{-i}, \pi_{i})~.
\end{align}
Since the game is also zero-sum, we have:
\begin{align}\label{eq:3}
    \phi_{-i}(\pi_{-i}, \pi_{i}) = -\phi_{i}(\pi_{-i}, \pi_{i})~.
\end{align}
By combing Equation~\ref{eq:2} and Equation~\ref{eq:3}, for any $\pi_i$ and $\pi_{-i}$, we get:
\begin{align}\label{eq:21}
    \phi_i(\pi_i, \pi_{-i}) + \phi_{i}(\pi_{-i}, \pi_{i}) = 0~.
\end{align}
Let $\pi_i=\pi_{-i}=\pi$, we get what we need to prove:
\begin{align}
    \phi_i(\pi, \pi) = 0~.
\end{align}
 
\end{proof}
Now we can begin proof of our proposition.
\begin{proof}
To prove this theorem, we need a further assumption that PSRO maintains only one population for two-player symmetric game, which is a quite common practice. Therefore, the joint Nash aggregated policy satisfies that $\bm{\pi_{E}} = (\pi_i, \pi_{-i})$ satisfies $\pi_i=\pi_{-i}$. Therefore, with the help of Lemma \ref{lemma:2}:

\begin{align}
    \phi_i(\pi_i, \pi_{-i}) = \phi_{-i}(\pi_{-i}, \pi_{i}) = 0~.
\end{align}
Furthermore, exploitability for symmetric zero-sum game can be written as:
\begin{align}
    \operatorname{Expl}(\bm{\pi_E}) &=\sum_{i=1}^{2} \max_{\pi_{i}^{\prime}}\phi_{i}(\pi_i^{\prime}, \pi_{-i})-\phi_{i}(\pi_i, \pi_{-i})\\
    &=\sum_{i=1}^{2} \max_{\pi_{i}^{\prime}}\phi_{i}(\pi_i^{\prime}, \pi_{-i})\\
    &=2\max_{\pi_{i}^{\prime}}\phi_i(\pi_i^{\prime}, \pi_{-i})~,
\end{align}
where the last equation comes from the symmetry of the game and $\pi_i=\pi_{-i}$.

For PE, it is calculated as:
\begin{align}
    \operatorname{PE}(\{\pi_i\}) &= \min_{\pi_{-i}^{\prime}}\phi_{i}(\pi_i, \pi_{-i}^{\prime})\\
    &=-\max_{\pi_{-i}^{\prime}}\phi_{i}(\pi_{-i}^{\prime}, \pi_i)\\
    &=-\frac{1}{2}\operatorname{Expl}(\bm{\pi_E})~.
\end{align}
The second to last equation comes from Equation \ref{eq:21}, and the last equation is due to the assumption that $\pi_i=\pi_{-i}$.
\end{proof}
\begin{proposition}
If there are two populations $\mathfrak{P}_i$, $\mathfrak{Q}_i$ and $\mathfrak{P}_i\subseteq\mathcal{Q}_i$, then $\operatorname{PE}(\mathfrak{P_i})\le \operatorname{PE}(\mathfrak{Q_i})$, while the relationship for exploitability of the Nash aggregated policies of $\mathfrak{P}_i$ and $\mathfrak{Q}_i$ may or may not hold.
\end{proposition}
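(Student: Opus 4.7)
The plan is to separate the statement into two logically independent pieces: a set-containment argument for the PE monotonicity, followed by an explicit small counterexample for the exploitability claim. Only one instance of failure is needed for the second part, since the wording ``may or may not hold'' only requires that monotonicity fail somewhere.

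For the PE inequality I would argue directly from Equation~\ref{eq:pe}. Fix any opponent $\pi_{-i}$. Every weight vector $\bm{\alpha}$ over $\mathfrak{P}_i$ satisfying $\mathbf{1}^\top\bm{\alpha}=1,\;\bm{\alpha}\ge 0$ extends to a feasible weight vector $\tilde{\bm{\alpha}}$ over $\mathfrak{Q}_i$ by placing zero mass on each policy in $\mathfrak{Q}_i\setminus\mathfrak{P}_i$, and under this extension the aggregated payoff $\sum_k \alpha_k \phi_i(\pi_i^k,\pi_{-i})$ is preserved. Hence for every $\pi_{-i}$ the inner maximum over $\mathfrak{Q}_i$ is at least the inner maximum over $\mathfrak{P}_i$, and taking $\min_{\pi_{-i}}$ on both sides preserves the inequality, yielding $\operatorname{PE}(\mathfrak{P}_i)\le \operatorname{PE}(\mathfrak{Q}_i)$.

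For the exploitability part, I would use matching pennies with row payoff $\mathbf{A}=\bigl(\begin{smallmatrix}1 & -1 \\ -1 & 1\end{smallmatrix}\bigr)$, whose unique full-game Nash is the uniform joint policy $\bm{\pi}^\star=((0.5,0.5),(0.5,0.5))$. Set $\mathfrak{P}_1=\mathfrak{P}_2=\{(0.5,0.5)\}$: the $1\times 1$ meta-game has its single entry equal to $0$, the Nash aggregated joint policy is $\bm{\pi}^\star$ itself, and the exploitability is $0$. Enlarge only player $1$'s population to $\mathfrak{Q}_1=\{(0.5,0.5),H\}$, where $H$ is the pure ``heads'' action. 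Both meta-game entries become $0$, so any mixture over $\mathfrak{Q}_1$ is a valid meta-Nash, and under the uniform tie-break commonly used in PSRO solvers the Nash aggregated policy for player $1$ is $(0.75,0.25)$; player $2$'s best deviation (pure tails) yields a gain of $0.5$, so the exploitability strictly exceeds $0$. Thus exploitability grew from $0$ to a positive quantity under strict inclusion of populations, falsifying monotonicity.

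The main obstacle is the non-uniqueness of the meta-Nash in the enlarged game, which makes the counterexample contingent on a tie-breaking rule. To make the argument fully rigorous I would either explicitly declare the uniform tie-break convention used by the PSRO baselines of Section~\ref{sec:5}, or perturb the added policy slightly away from the pure action $H$, which perturbs the meta-game matrix into one with a unique meta-Nash placing positive mass on the added policy; the exploitability gap then follows by a continuity argument, eliminating any dependence on the Nash solver.
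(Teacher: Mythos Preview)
Your PE monotonicity argument is correct and essentially identical to the paper's: zero-pad feasible weight vectors over $\mathfrak{P}_i$ to feasible vectors over $\mathfrak{Q}_i$, so the inner maximum can only grow for each fixed $\pi_{-i}$, and the outer $\min$ preserves the inequality.

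The exploitability counterexample has a genuine gap. Your first fix (declaring the uniform tie-break convention) does salvage the example, but the perturbation route does not work. In matching pennies, player~2's sole population policy $(0.5,0.5)$ is the full-game Nash, so \emph{every} strategy for player~1 earns meta-payoff exactly zero against it. Perturbing $H$ to any nearby mixed action therefore leaves the $2\times 1$ meta-game column identically $(0,0)^\top$; the meta-Nash remains completely indeterminate and no continuity argument can single out a mixture placing positive mass on the perturbed policy. The degeneracy is structural: whenever the opponent's fixed population already contains a full-game equilibrium policy, player~$i$'s meta-payoffs collapse to a constant, and no enlargement or perturbation of $\mathfrak{P}_i$ can move the meta-Nash in a solver-independent way. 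The paper takes a different route, working in Rock--Paper--Scissors with symmetric populations $\{(1/2,1/2,0)\}\subset\{(1/2,1/2,0),(0,1,0)\}$ for both players, where neither population member is a full-game Nash and the meta-game is non-degenerate. If you want your construction to stand without invoking a solver convention, start both players at a non-equilibrium policy so that the meta-payoffs are non-constant and the meta-Nash is uniquely determined before and after enlargement.
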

\begin{proof}
We begin with proof of the monotonicity of PE. W.o.l.g, let us assume that $\mathfrak{P}_i=\{\pi_i^{k}\}_{k=1}^{M}$, $\mathfrak{Q}_i=\{\pi_i^{k}\}_{k=1}^{N}$, where $M\le N$.
Then for the population effectivity of $\mathfrak{Q}_{i}$:
\begin{align}\label{eq:peq}
    \operatorname{PE}(\mathfrak{Q}_i) = \min_{\pi_{-i}} \max_{\mathbf{1}^{\top}\bm{\alpha}=1\atop \alpha_i\ge0}\sum_{k=1}^{N} \alpha_k\phi_i(\pi_{i}^{k}, \pi_{-i})~.
\end{align}
where $\bm{\alpha} = (\alpha_1, \cdots, \alpha_{N})^{\top}$. Let $\alpha_i = 0$ for $M+1\le i\le N$, then we get:
\begin{align}\label{eq:3}
    \min_{\pi_{-i}} \max_{\mathbf{1}^{\top}\bm{\alpha}=1\atop \alpha_i\ge0}\sum_{k=1}^{N} \alpha_k\phi_i(\pi_{i}^{k}, \pi_{-i})&\ge \min_{\pi_{-i}} \max_{\mathbf{1}^{\top}\bm{\alpha}=1, \alpha_i\ge0\atop \alpha_i = 0\ \forall M+1\le i\le N}\sum_{k=1}^{N} \alpha_k\phi_i(\pi_{i}^{k}, \pi_{-i})\\
    \label{eq:3}&=\min_{\pi_{-i}} \max_{\mathbf{1}^{\top}\bm{\alpha^{\prime}}=1\atop \alpha_i^{\prime}\ge0}\sum_{k=1}^{M} \alpha_k^{\prime}\phi_i(\pi_{i}^{k}, \pi_{-i})\\
    &=\operatorname{PE}(\mathfrak{P}_i)~.
\end{align}
Now we conclude that:
\begin{align}
    \operatorname{PE}(\mathfrak{Q}_{i})\ge \operatorname{PE}(\mathfrak{P}_i)
\end{align}
Regarding exploitability, the analysis comes from our Example~\ref{example:1}. Suppose player $1$ holds the population $\mathfrak{P}_{1} = \{\left[\begin{array}{c}\frac{1}{2} \\ \frac{1}{2} \\ 0\end{array}\right]\}$ and $\mathfrak{P}_{2} = \mathfrak{P}_1$. Apparently, the Nash aggregated joint policy is
\begin{align}
    \bm{\pi_E}^{\mathfrak{P}}=(\left[\begin{array}{c}\frac{1}{2} \\ \frac{1}{2} \\ 0\end{array}\right], \left[\begin{array}{c}\frac{1}{2} \\ \frac{1}{2} \\ 0\end{array}\right])~,
\end{align} since there is only one policy in each players' population. 

Now consider another two populations $\mathfrak{Q}_1 = \{\left[\begin{array}{c}\frac{1}{2} \\ \frac{1}{2} \\ 0\end{array}\right], \left[\begin{array}{c}0 \\ 1 \\ 0\end{array}\right]\}$ and $\mathfrak{Q}_2 = \mathfrak{Q}_1$. Then the Nash aggregated joint policy is given by:
\begin{align}
\bm{\pi_{E}}^{\mathfrak{Q}} = (\left[\begin{array}{c}0 \\ 1 \\ 0\end{array}\right], \left[\begin{array}{c}0 \\ 1 \\ 0\end{array}\right]).
\end{align}
With simple derivations, the exploitability for $\bm{\pi_E}^{\mathfrak{P}}$ and $\bm{\pi_E}^{\mathfrak{Q}}$ is:
\begin{align}
    \operatorname{Expl}(\bm{\pi_E}^{\mathfrak{P}}) = 1.
\end{align}
\begin{align}
    \operatorname{Expl}(\bm{\pi_E}^{\mathfrak{Q}}) = 2.
\end{align}
Now we can conclude that for player $1$ and player $2$, even their population both get strictly enlarged ($\mathfrak{P}_1\subseteq\mathfrak{Q}_1$ and $\mathfrak{P}_2\subseteq\mathfrak{Q}_2$), they become more exploitable: $\operatorname{Expl}(\bm{\pi_E}^{\mathfrak{Q}})\ge \operatorname{Expl}(\bm{\pi_E}^{\mathfrak{P}})$.
\end{proof}

\begin{proposition}
If the underlying game $\phi_i(\cdot, \cdot)$ is a matrix game, then computing PE is still solving a matrix game.
\end{proposition}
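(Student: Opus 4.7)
The plan is to exploit the bilinearity of a matrix game so that the inner weighted sum in the PE definition collapses into a single bilinear form, thereby exhibiting PE itself as the value of a suitably constructed matrix game.

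First I would unpack the definitions. For a matrix game there exists a payoff matrix $\mathbf{A}$ with $\phi_i(\pi_i, \pi_{-i}) = \pi_i^{\top} \mathbf{A}\, \pi_{-i}$, where $\pi_i$ and $\pi_{-i}$ range over the respective probability simplices. Substituting into the definition of population effectivity and interchanging the finite sum with the bilinear form yields
\begin{equation*}
\sum_{k=1}^{N}\alpha_k\, \phi_i(\pi_i^{k}, \pi_{-i}) \;=\; \Bigl(\sum_{k=1}^{N}\alpha_k\, \pi_i^{k}\Bigr)^{\!\top}\!\mathbf{A}\, \pi_{-i} \;=\; \bm{\alpha}^{\top}\,\tilde{\mathbf{A}}\, \pi_{-i},
\end{equation*}
where I define the reduced payoff matrix $\tilde{\mathbf{A}} \in \mathbb{R}^{N \times d_{-i}}$ row-wise by $\tilde{\mathbf{A}}_{k,\cdot} = (\pi_i^{k})^{\top}\mathbf{A}$. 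Because each $\pi_i^{k}$ is fixed, $\tilde{\mathbf{A}}$ is a constant matrix that is fully determined by the population $\mathfrak{P}_i$ and the original payoff matrix.

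Next I would rewrite the PE optimization in this notation:
\begin{equation*}
\operatorname{PE}(\{\pi_i^{k}\}_{k=1}^{N}) \;=\; \min_{\pi_{-i}\in\Delta}\;\max_{\bm{\alpha}\in\Delta_N}\; \bm{\alpha}^{\top}\,\tilde{\mathbf{A}}\,\pi_{-i},
\end{equation*}
where $\Delta$ and $\Delta_N$ are the respective probability simplices. This is exactly the value of the two-player zero-sum matrix game with payoff matrix $\tilde{\mathbf{A}}$, where the row player chooses $\bm{\alpha}$ (an aggregation weight) and the column player chooses $\pi_{-i}$ (a mixed strategy in the original column space). I would invoke the minimax theorem to note that the $\min$--$\max$ equals the $\max$--$\min$, so the quantity is well defined, and I would remark that it can be computed by the standard linear program for matrix games, whose size is $N \times d_{-i}$ rather than depending on the dimension of the original row policy space.

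The argument is essentially a one-line algebraic reduction, so there is no real obstacle; the only thing to be careful about is making sure the resulting $\tilde{\mathbf{A}}$ is genuinely a bona fide matrix-game payoff, which follows immediately from the fact that convex combinations of mixed strategies live in the simplex and the bilinear form is linear in each argument separately.
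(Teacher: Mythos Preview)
Your proposal is correct and follows essentially the same approach as the paper: both substitute the bilinear form $\phi_i(\pi_i,\pi_{-i})=\pi_i^{\top}\mathbf{A}\pi_{-i}$ into the PE definition and collapse the weighted sum to obtain $\bm{\alpha}^{\top}\tilde{\mathbf{A}}\,\pi_{-i}$ with $\tilde{\mathbf{A}}=(\pi_i^{1},\dots,\pi_i^{N})^{\top}\mathbf{A}$, identifying PE as the value of this new matrix game. Your added remarks about the minimax theorem and the resulting LP size are fine embellishments but not needed for the statement as phrased.
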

\begin{proof}
The proof follows some simple algebric manipulations. Note that for matrix games, the reward function is given by:
\begin{align}
    \phi_i(\pi_i, \pi_{-i}) = \pi_i^{\top}\mathbf{P}\pi_{-i}~,
\end{align}
where $\mathbf{P}$ is the payoff matrix. Then for population effectivity:
\begin{align}
\operatorname{PE}(\{\pi_{i}^{k}\}_{k=1}^{N}) &= \min_{\pi_{-i}} \max_{\mathbf{1}^{\top}\bm{\alpha}=1\atop \alpha_i\ge0}\sum_{k=1}^{N} \alpha_k\phi_i(\pi_{i}^{k}, \pi_{-i})\\
&= \min_{\pi_{-i}}\max_{\mathbf{1}^{\top}\bm{\alpha}=1\atop \alpha_i\ge0}\sum_{k=1}^{N} \alpha_k (\pi_{i}^{k})^{\top}\mathbf{P}\pi_{-i}\\
&=\min_{\pi_{-i}}\max_{\mathbf{1}^{\top}\bm{\alpha}=1\atop \alpha_i\ge0}\bm{\alpha}^{\top}(\pi_i^{1}, \cdots, \pi_i^{N})^{\top}\mathbf{P}\pi_{-i}~.
\end{align}
Therefore, solving PE is still a matrix game with payoff matrix $(\pi_i^{1}, \cdots, \pi_i^{N})^{\top}\mathbf{P}$.
\end{proof}
\section{PE Approximation via PSRO}\label{app:pe}
We have already mentioned the tractability for PE of matrix games in Theorem \ref{theorem:3}. However, for more general games, solving exact PE is still very hard. Since PE is still computing an NE, we here propose using PSRO again as the approximate solver. The only difference is that population of player $i$ is already fixed by $\mathfrak{P}_i=\{\pi_{i}^{k}\}_{k=1}^{N}$. Therefore, during iterations of PSRO, only player $-i$ needs to enlarge its population. We now outline the algorithm PE($n$) in Algorithm \ref{alg:pe}. The intuition behind this algorithm is that the opponent is enlarging its population gradually and trying to exploit $\mathfrak{P}_i$. Therefore, the metric of PE is actually testing how exploitable a population is \textbf{by gradually constructing a real adversarial}! The opponent strength $n$ essentially represents how accurate each best response is.
    \begin{algorithm}[t!]
      \caption{PE($n$)}
      \label{alg:pe}
    \begin{algorithmic}[1] 
      \STATE {\bfseries Input:} Population $\mathfrak{P}_{i}=\{\pi_{i}^{k}\}_{k=1}^{N}$, Opponent Strength $n$, Number of iteraions $T$
      \STATE $\mathfrak{P}_{-i}\leftarrow$ Initialize opponent population with one random policy
      \STATE $\mathbf{A}_{\mathfrak{P}_i\times \mathfrak{P}_{-i}}\leftarrow$ Initialize empirical payoff matrix
      \FOR{$t=1$ {\bfseries to} $T$}
      \STATE $\sigma_i, \sigma_{-i}\leftarrow$ Nash Equilibrium on  $\mathbf{A}_{\mathfrak{P_{i}}\times\mathfrak{P_{-i}}}$
      \STATE $\pi_{-i}^{t}(\theta)\leftarrow$ Initialize a new opponent policy

      \STATE $\theta^{\star}\leftarrow$ Train $\pi_{-i}^{t}$ against mixture of $\sum_{j}\sigma_{i}^{j}\pi_{i}^{j}$ with $n$ gradient steps
      \STATE $\mathfrak{P}_{-i}\leftarrow \mathfrak{P}_{-i}\cup \{\pi_{-i}^{t}(\theta^{\star})\}$
      \STATE Compute missing entries in the evaluation matrix $\mathbf{A}_{\mathfrak{P}_{i}\times\mathfrak{P}_{-i}}$
      \ENDFOR
      \STATE {\bfseries Output:} Nash value on $\mathbf{A}_{\mathfrak{P}_{i}\times\mathfrak{P}_{-i}}$
    \end{algorithmic}
    \end{algorithm}

\section{Additional Experimental Results on Google Research Football}\label{app:add}
In real world games, we expect our models are robust enough to defeat all the previous models in the model pool and show diverse behaviors to better exploit the opponents.
To further evaluate the performance of all the models generated with different methods during the training process, we rank all the models with the Elo rating system [2], and the results are shown in Figure \ref{football_elo}.
It can be found that models generated by PSRO w. BD\&RD outperform other methods and reaches an Elo score of around $1300$. This implies that a combination of BD and RD will essentially contribute to the generation of diverse opponents during the training, so that the final models will be more robust and less exploitable since they are more likely to be offered strong diverse opponents and have the chance to learn to defeat them. Additionally, we also visualize the policies of our methods when playing against other baseline methods and verify our methods truly generate diverse behaviors (see \url{https://sites.google.com/view/diverse-psro/}).

\begin{figure}[h!]
    \vspace{-8pt}
  \centering
  \includegraphics[width=.55\linewidth]{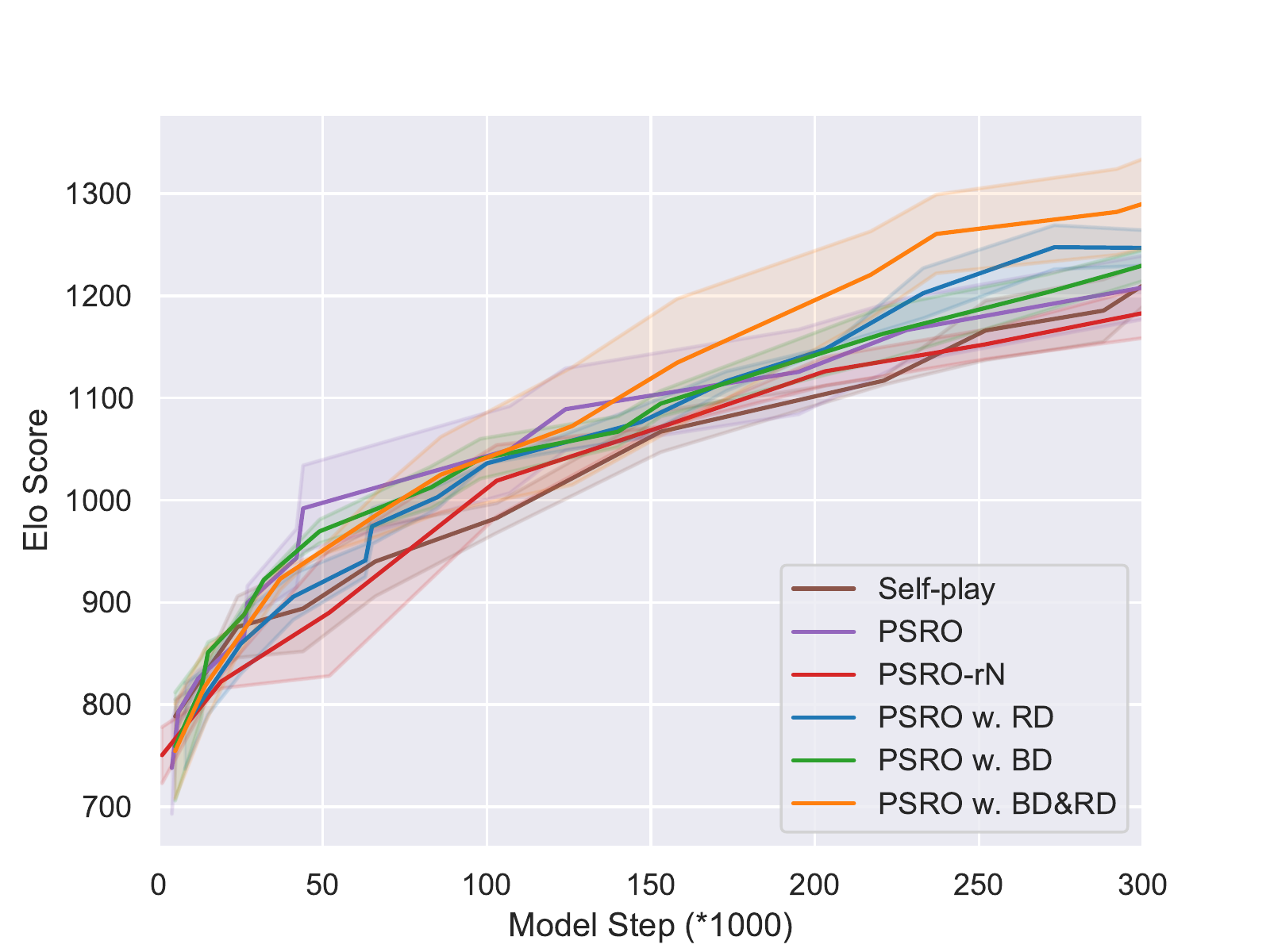}
  \caption{The Elo scores of all the models generated by different methods. Shaded areas represent the standard deviation.}
  \label{football_elo}
   \vspace{-15pt}
\end{figure}

\section{Environment Details}\label{app:env}
\subsection{Environment Details of Non-Transitive Mixture Model}
In our experiments, we set $l=4$ and use $9$ Gaussian distributions in the plane. This environment involves both transitivity and non-transitivity because of the delicately designed $\mathbf{S}$ in the reward function $\phi_{i}(\pi_i, \pi_{-i}) = \pi_i^{\top} \mathbf{S} \pi_{-i} + \mathbf{1}^{\top}(\pi_i-\pi_{-i})$. $\mathbf{S}$ is constructed by:
$$ \mathbf{S}[i][k]=\left\{
\begin{array}{rcl}
0         &      & k = i\\
1         &      & 0<(k-i)\bmod{(2l + 1)}\le l\\
-1        &      & \text{otherwise}\\
\end{array} \right. $$

\subsection{Environment Details of \textit{Google Research Football}}
\textit{Google Research Football} (GRF) [4] is a  physics-based 3D simulator where agents can be trained to play football. The engine implements a full football game under standard rules (such as goal kicks, side kicks, corner kicks, etc.), with $11$ players on each team and $3000$ steps duration for a full game. It offers several state wrappers (such as Pixels, SMM, Floats) and the players can be controlled with $19$ discrete actions (such as move in $8$ directions, high pass, long pass, steal, etc.). The rewards include both the scoring reward ($+1$ or $-1$) and the checkpoint reward, where the checkpoint reward means that the agent will be reward with $+0.1$ if it is the first time that the agent’s team possesses the ball in each of the checkpoint regions.

\section{Hyperparameter Settings}
\subsection{Hyperparameter Settings for Real-World Games}
We report our hyperparameter setting for real-world metagames in Table~\ref{table:matrix_game_params}.
\begin{table}[h]
\centering
\caption{The hyperparameters of real-world metagames.}
\label{table:matrix_game_params}
\vspace{5pt}
\begin{tabular}{lcl}
\toprule
Parameter & Value & Description\\
\midrule 
Learning rate & 0.5 & Learning rate for agents\\ 
Improvement threshold & 0.03 & Convergence criteria\\
Metasolver & Fictitious play & Method to compute NE\\
Metasolver iterations & 1000 & $\#$ of iterations of metasolver\\
Metasolver iterations for PE & 2000 & $\#$ of iterations to compute PE\\
$\#$ of threads in pipeline & 1.0 & Number of learners in Pipeline-PSRO\\
$\#$ of seeds & 5 & $\#$ of trials\\
$\lambda_1$ & 0.2 & Weight for BD\\
$\lambda_2$ & 0.2 & Weight for RD\\

\bottomrule
\end{tabular}
\end{table}

\subsection{Hyperparameter Settings for Non-Transitive Mixture Model}
We report our hyperparameter setting for non-transitive mixture model in Table~\ref{table:non_transitive_mixture}.

\begin{table}[h]
\centering
\caption{The hyperparameters of non-transitive mixture model.}
\label{table:non_transitive_mixture}
\vspace{5pt}
\resizebox{1.\textwidth}{!}{
\begin{tabular}{lcl}
\toprule
Parameter & Value & Description\\
\midrule 
Learning rate & 0.1 & Learning rate for agents\\ 
Optimizer & Adam & Gradient-based optimization\\
Betas & (0.9, 0.99) & Parameter for Adam\\
$N_{train}$ & 5 & $\#$ of iterations using GD per BR\\
$\pi_i$ & $\pi_{k}^{i}=\exp (-\left(x_{i}-\mu_{k}\right)^{\top} \Sigma\left(x_{i}-\mu_{k}\right) / 2)$ & Policy parameterization\\
$\Sigma$ & $1/2\mathbf{I}$  &  Covariance of each Gaussian \\
$u_k$ & $r(\cos(\frac{2\pi}{2l+1}k), \sin(\frac{2\pi}{2l+1}k))$ & Center of each Gaussian \\
$l$ & $4$ & $9$ Gaussian distributions\\
$r$ & $5$ &  Radius of each Gaussian\\
Metasolver & Fictitious play & Method to compute NE\\
Metasolver iterations & 1000 & $\#$ of iterations of metasolver\\
$\#$ of threads in pipeline & 1.0 & Number of learners in Pipeline-PSRO\\
$\#$ of iteration & 50 & $\#$ of training iterations for PSRO\\
$\#$ of seeds & 5 & $\#$ of trials\\
$\lambda_1$ & $1$ & Weight for BD\\
$\lambda_2$ & $1500$ & Weight for RD\\
Decrease rate of $\lambda_1$ and $\lambda_2$ & $1-\frac{0.7}{1+\exp (-0.25(t-25))}$ & The weights will decrease as the iteration progresses, where $t$ is the current iteration\\

$\#$ of iteration for PE & 30 & $\#$ of training iterations using PSRO for PE\\
\bottomrule
\end{tabular}
}
\end{table}

\subsection{Hyperparameter Settings for \textit{Google Research Football}}\label{app:hyper}
\textbf{States and Network Architecture.} For GRF, We use a structured multi-head vector as the states input. The information of each head is listed in Table \ref{table:football_states}:

\begin{table}[h]
\vspace{-10pt}
\centering
\caption{The states input for \textit{Google Research Football}.}
\label{table:football_states}
\vspace{5pt}
\resizebox{1.\textwidth}{!}{
\begin{tabular}{lll}
\toprule
Head index & Length &  Information   \\
\midrule 
0 & 9 & Ball information (position, direction, rotation) \\
1 & 25 & Ball owner information (ball owned team id, ball owned player id) \\
2 & 25 & Active player information (id, position, direction, area of the field) \\
3 & 6 & Active player vs. ball (distance, 1/distance) \\
4 & 4 & Active player vs. ball player (distance, 1/distance) \\ 
5 & 66 & Active player vs. self-team players (position, distance, 1/distance, position cosine, direction cosine) \\
6 & 66 & Active player vs. oppo-team players (position, distance, 1/distance, position cosine, direction cosine) \\
7 & 88 & Self-team information (position, direction, tired factor, yellow card, active player, offside flag) \\
8 & 88 & Oppo-team information (position, direction, tired factor, yellow card, active player, offside flag) \\
9 & 32 & Goal keeper information (distance to self-player/oppo-player, nearest/farthest player information) \\
10 & 7 &  Game mode information (one-hot) \\
11 & 29 &  Legal action and sticky action information \\
12 & 76 &  History (one-hot) actions  of last four steps \\
\bottomrule
\end{tabular}
}
\end{table}

The network structure is shown in Figure \ref{football_network}. The shapes of the fully-connected layers for the input heads are: [$32, 64, 64, 16, 16, 128\times64, 128\times64, 128\times128, 128\times128, 64, 16, 64, 64$], followed by three fully-connected layers (i.e. [$512\times256\times128$]) and finally output the policy and value.

\begin{figure}[h!]
    \vspace{-8pt}
  \centering
  \includegraphics[width=.95\linewidth]{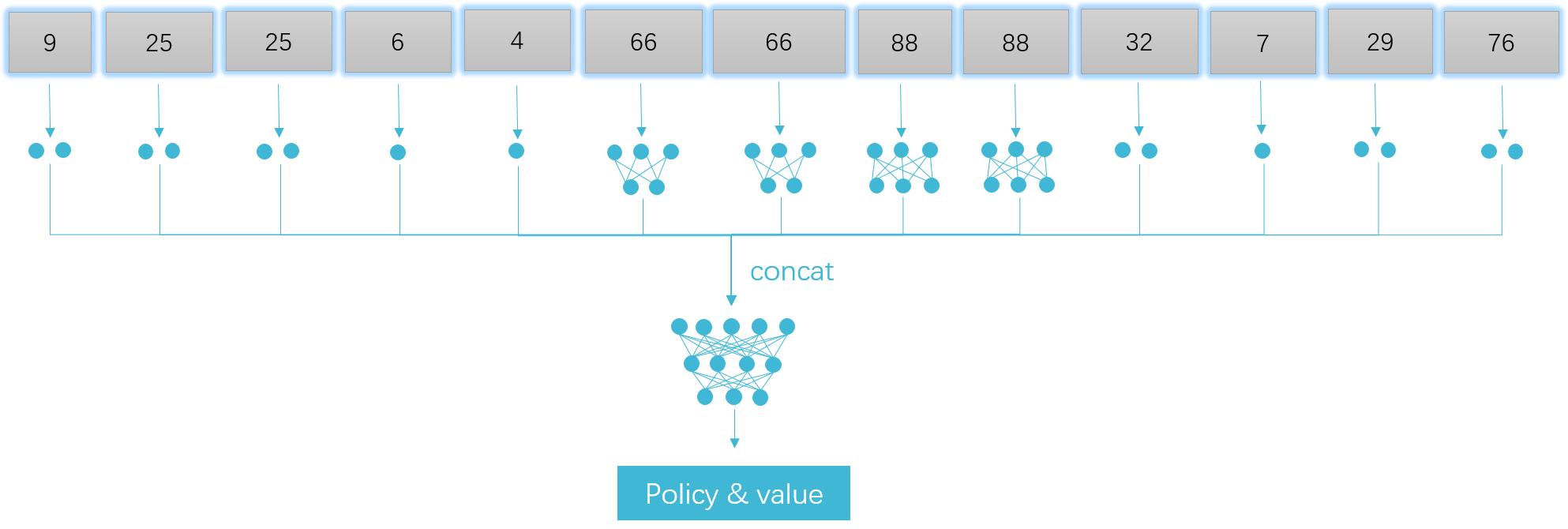}
  \caption{The shape of input states for each head and the general network structure.}
  \label{football_network}
   \vspace{-15pt}
\end{figure}

\textbf{Hyperparameter Settings for Reinforcement Learning Oracle.} We use IMPALA [1] as the reinforcement learning algorithm to approximate the best response for each opponent selected by different methods during the training process. The hyperparameters are listed in Table \ref{table:football_hyperparameter}.
\begin{table}[h]
\centering
\caption{The hyperparameters of the IMPALA algorithm.}
\label{table:football_hyperparameter}
\vspace{5pt}
\begin{tabular}{lr}
\toprule
Parameter & Value   \\
\midrule 
Batch Size & 1024  \\ 
Discount Factor ($\gamma$) & 0.993 \\
Learning Rate & 0.00019896 \\
Number of Actors & 100 \\
Optimizer & Adam \\
Unroll Length/n-step & 1.0 \\
Entropy Coefficient & 0.0001 \\
Value Function Coefficient & 1.0 \\
Grad Clip Norm & 0.5 \\
Rho (for V-Trace) & 1.0 \\
C (for V-Trace) & 1.0 \\
$\lambda_1$ (Weight for BD) & 0.5 \\
$\lambda_2$ (Weight for RD) & 0.5 \\

\bottomrule
\end{tabular}
\end{table}

\textbf{Network Training Details.}
We carry out the experiments on six servers (CPU: AMD EPYC $7542$ $128$-Core Processor, RAM: $500$G), with each one corresponding to one of six methods (i.e. Self-play, PSRO, PSROrN, PSRO w. BD, PSRO w. RD, PSRO w.BD\&RD). For each experiment, the approximated best response (i.e. checkpoint) is saved only when the win-rate against corresponding opponent is stable during two checks (check frequency = $1000$ model steps, and $\Delta_{winrate} < 0.05$) or the training model step reaches an upper bound (i.e. $50000$ model steps). The $\lambda_1$ and $\lambda_2$ we used for both coefficients are $0.5$. For the \textit{Google Research Football} environment settings, we use both scoring reward and checkpoint reward for the training. 


\section{Ablation Studies}\label{app:ab}
We also conduct ablation study on the sensitivity of the diversity weights $\lambda_1$ and $\lambda_2$ in real-world games, non-transitive mixture model, and \textit{Google Research Football}.
\subsection{Real-World Games}
We report the exploitability and PE by varying $\lambda_1$ in Figure \ref{fig:cone_2}, \ref{fig:cone_pe_2} and $\lambda_2$ in Figure \ref{fig:cone_3}, \ref{fig:cone_pe_3}. It can be found that too large weights can cause the slow convergence and too small weights prevent the algorithm from finding populations with smaller exploitability and larger PE.

\begin{figure*}[t!]
\label{exp:alphastar_2}
  \centering
\begin{subfigure}[l]{.45\textwidth}
\includegraphics[width=1.\textwidth]{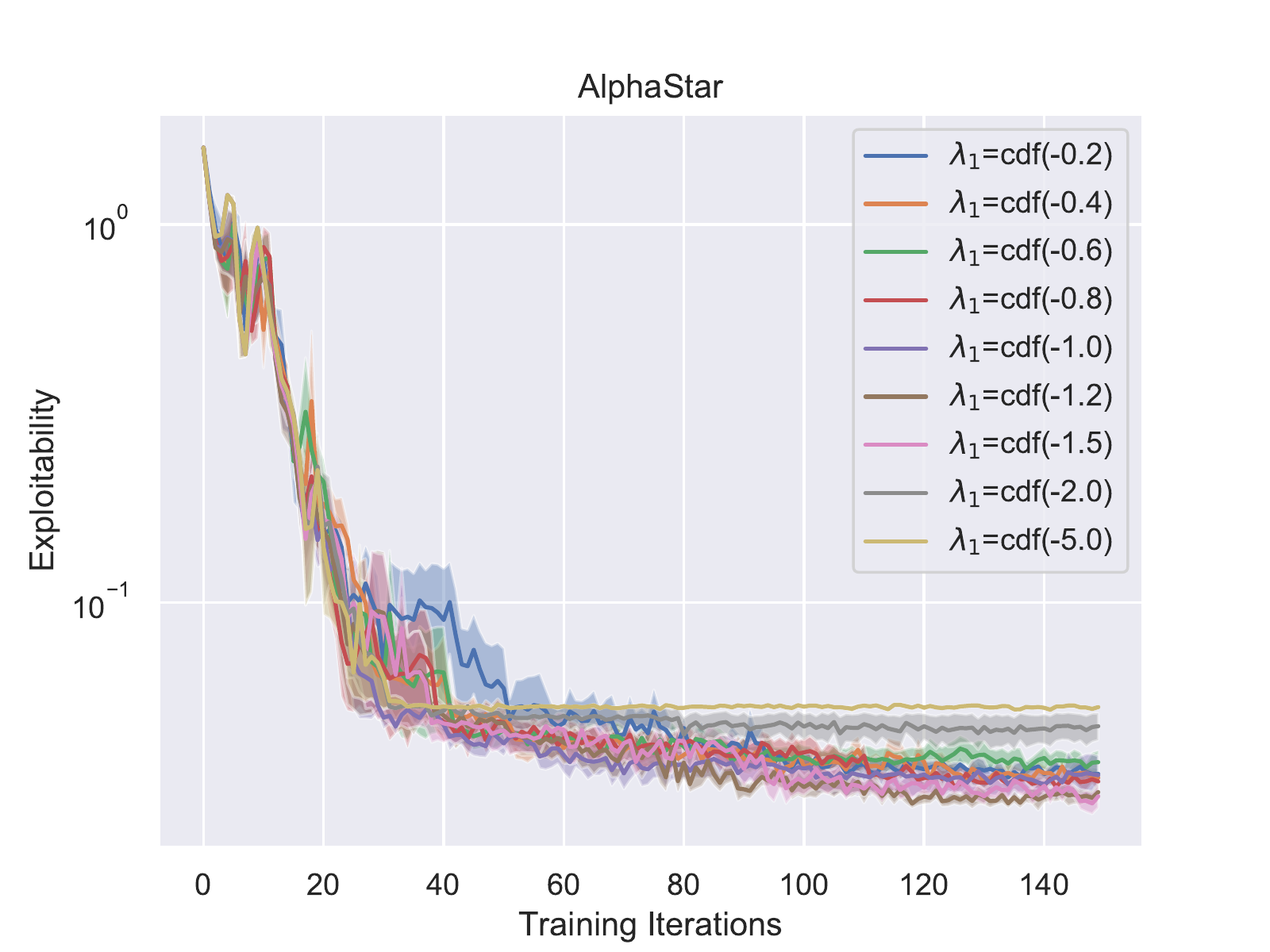}
\caption{}
  \label{fig:cone_2}
\end{subfigure}
\begin{subfigure}[l]{.45\textwidth}
   \includegraphics[width=1.\textwidth]{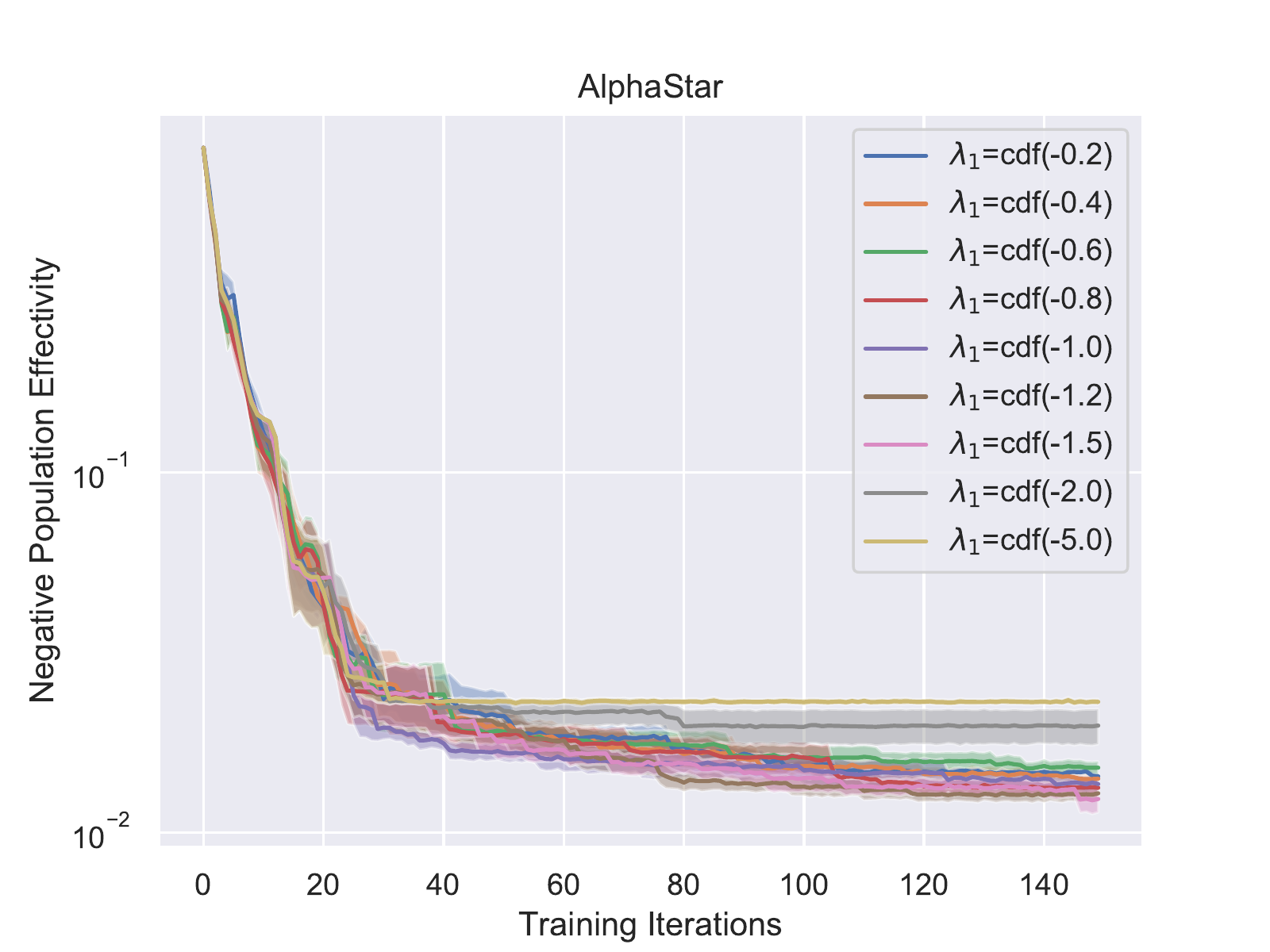}
\caption{}
  \label{fig:cone_pe_2}
\end{subfigure}
\caption{Ablation study on $\lambda_1$. \textbf{(a)}:Exploitability \textit{vs.} training iterations. \textbf{(b)}: Negative Population Effectivity \textit{vs.} training iterations on the AlphaStar game. $\operatorname{cdf}(k)=\int_{-\infty}^{k}\frac{1}{\sqrt{2\pi}}e^{-\frac{x^{2}}{2}}dx$ is the cumulative distribution function of the standard normal distribution.}
\end{figure*}

\begin{figure*}[t!]
\label{exp:alphastar_3}
  \centering
\begin{subfigure}[l]{.45\textwidth}
\includegraphics[width=1.\textwidth]{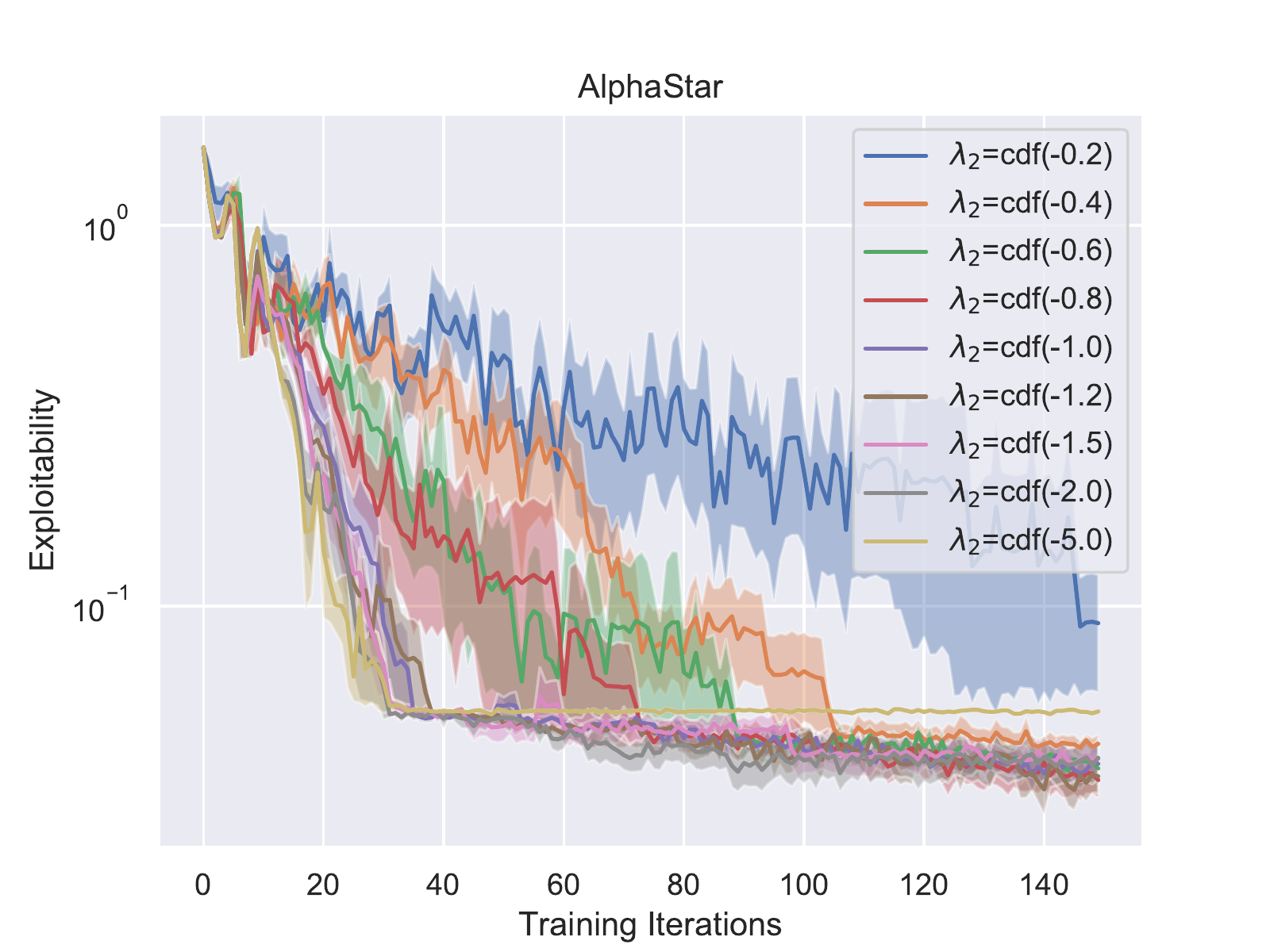}
\caption{}
  \label{fig:cone_3}
\end{subfigure}
\begin{subfigure}[l]{.45\textwidth}
   \includegraphics[width=1.\textwidth]{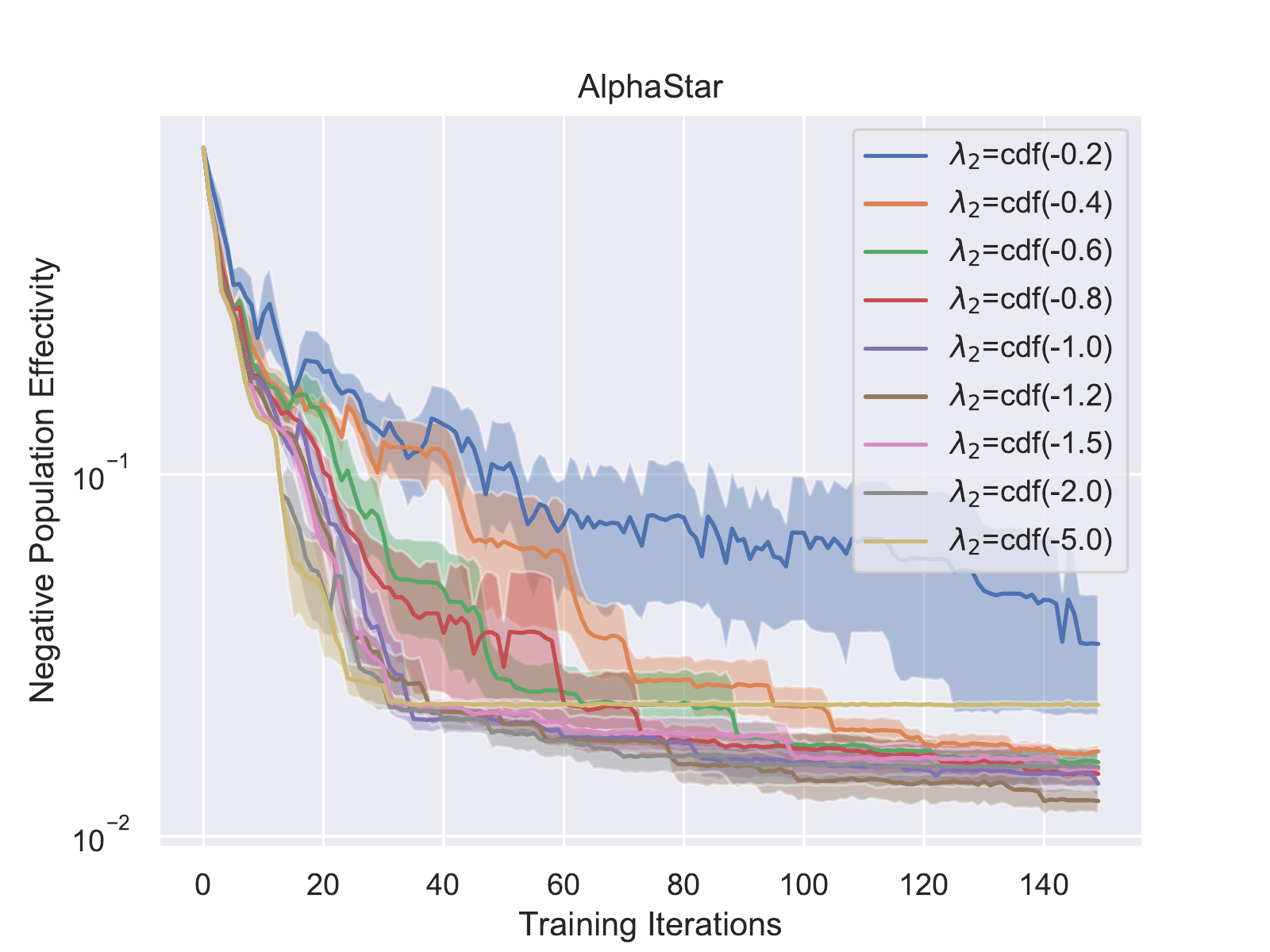}
\caption{}
  \label{fig:cone_pe_3}
\end{subfigure}
\caption{Ablation study on $\lambda_2$. \textbf{(a)}:Exploitability \textit{vs.} training iterations. \textbf{(b)}: Negative Population Effectivity \textit{vs.} training iterations on the AlphaStar game. $\operatorname{cdf}(k)=\int_{-\infty}^{k}\frac{1}{\sqrt{2\pi}}e^{-\frac{x^{2}}{2}}dx$ is the cumulative distribution function of the standard normal distribution.}
\end{figure*}

\vspace{10pt}
\begin{table}[t]
\centering
\caption{Exploitability$\times 10^2$ for populations generated by PSRO only with BD with varied diversity weight $\lambda_1$.}
\label{table:ablation_diverge}
\vspace{5pt}
\resizebox{1.\textwidth}{!}{
\begin{tabular}{lrrrrrr}
\toprule
$\lambda_2$       &    $7.5$          & $15$       & $75$        & $750$      & $1500$ & $7500$  \\
 \midrule
 Expl& $\mathbf{14.57 \pm 0.69}$ & $14.93 \pm 1.87$  & $14.64 \pm 1.48$  & $42.37 \pm 10.12$  & $33.39\pm 5.71$ & $62.69\pm 10.90$\\
\bottomrule
\end{tabular}
}
\end{table}
\vspace{10pt}
\begin{table}[t]
\vspace{-10pt}
\centering
\caption{Exploitability$\times 10^2$ for populations generated by PSRO only with RD with varied diversity weight $\lambda_2$.}
\label{table:ablation_distance}
\vspace{5pt}
\begin{tabular}{lrrrrr}
\toprule
$\lambda_2$       &    $0.5$          & $1.0$       & $5.0$        & $10.0$      & $50.0$ \\
 \midrule
 Expl& $16.23 \pm 0.48$ & $\mathbf{14.06 \pm 1.20}$  & $14.77 \pm 0.09$  & $15.60 \pm 1.11$  & $31.29 \pm 12.93$ \\
\bottomrule
\end{tabular}
\end{table}

\subsection{Non-Transitive Mixture Model}
We report the exploitability of the final population generated by our algorithm with different $\lambda_1$ in Table \ref{table:ablation_diverge} and $\lambda_2$ in Table \ref{table:ablation_distance}. In this game, we set both $\lambda_1$ and $\lambda_2$ to decrease following the rate $1-\frac{0.7}{1+e^{(-0.25(t-25))}}$, where $t$ is the current iteration. We can find that in terms of exploitability, PSRO with only BD cannot help the population to achieve lower exploitability.

\subsection{Google Research Football}
We also carry out an ablation study on the weight of RD $\lambda_2$ (see Table \ref{table:football_winrate}) in the GRF environment, where we
fixed $\lambda_1$ to be $0.5$ and show the results with different $\lambda_2$.
\begin{table}[t!]
\centering
\caption{The win-rate between the final policies of different methods after trained for 300000 model steps. (We set $\lambda_1=\lambda_2=0.5$ as default values for PSRO w. RD and PSRO w. BD\&RD)}
\label{table:football_winrate}
\vspace{5pt}
\resizebox{1.\textwidth}{!}{
\begin{tabular}{lrrrrrr}
\toprule
Method & Self-play & PSRO & PSRO$_{rN}$ & PSRO w. BD & PSRO w. RD & PSRO w. BD\&RD \\
\midrule 
PSRO w. RD ($\lambda_2=1.0$) & $0.62 \pm 0.01$  & $0.49 \pm 0.03$ & $0.65 \pm 0.02$ & $0.47 \pm 0.01$ & $0.28 \pm 0.04$ & $0.33 \pm 0.02$ \\ 
PSRO w. RD ($\lambda_2=0.5$) & $0.68 \pm 0.03$  & $\mathbf{0.61 \pm 0.02}$ & $\mathbf{0.74 \pm 0.03}$ & $0.54 \pm 0.02$ & - & $0.43 \pm 0.02$ \\
PSRO w. RD ($\lambda_2=0.2$) & $0.63 \pm 0.02$ & $0.48 \pm 0.02$ & $0.68 \pm 0.02$ & $0.50 \pm 0.01$ & $0.45 \pm 0.03$ & $0.28 \pm 0.03$ \\
PSRO w. BD\&RD & $\mathbf{0.74 \pm 0.02}$  & $\mathbf{0.78 \pm 0.01}$ & $\mathbf{0.80 \pm 0.05}$ & $\mathbf{0.69 \pm 0.02}$ & $\mathbf{0.57 \pm 0.02}$ & - \\

\bottomrule
\end{tabular}
}
\end{table}
\vspace{30pt}
 \begin{algorithm}[t!]
     \caption{Optimization for Matrix Games}
      \label{alg:matrix_game}
    \begin{algorithmic}[1] 
      \STATE {\bfseries Input:} population $\mathfrak{P}_{i}$ for each $i$, meta-game $\mathbf{A}_{\mathfrak{P_{i}}\times\mathfrak{P_{-i}}}$, weights $\lambda_1$ and $\lambda_2$, learning rate $\mu$
      \STATE $\sigma_i, \sigma_{-i}\leftarrow \text{Nash on } \mathbf{A}_{\mathfrak{P_{i}}\times\mathfrak{P_{-i}}}$
      \STATE $\bm{\pi_E}\leftarrow$ Aggregate according to $\sigma_i, \sigma_{-i}$
      \STATE $\pi_i^{\prime}(\theta)\leftarrow$ Initialize a new random policy for player $i$
      \STATE $\mathbf{BR}_{qual}\leftarrow$ Compute best response against mixture of opponents $\sum_{k}\sigma_{-i}^{k}\phi_i(\cdot, \pi_{-i}^{k})$
      \WHILE{the reward $p$ improvement does not meet the threshold}
      \STATE $\mathbf{BR}_{occ}\leftarrow\arg \max_{s_j}D_{f}(s_j||\pi_i)$ for each pure strategy $s_j$
      \STATE $\mathbf{BR}\leftarrow$ Choose $\mathbf{BR}=\mathbf{BR}_{occ}$ with probability $\lambda_1$ else $\mathbf{BR}=\mathbf{BR}_{qual}$
      \STATE $\theta\leftarrow \mu\theta + (1-\mu)\theta_{\mathbf{BR}}$
      \STATE $p\leftarrow$ Compute the payoff $p$ after the update according to $\sum_{k}\sigma_{-i}^{k}\phi_i(\pi_i^{\prime}(\theta), \pi_{-i}^{k})$
      \ENDWHILE
      \STATE $\mathbf{BR}_{rew}\leftarrow\arg \max_{s_j} F(s_j)$ for each pure strategy $s_j$ with probability $\lambda_2$ else $\mathbf{BR}_{qual}$
      \STATE $\theta\leftarrow \mu\theta + (1-\mu)\theta_{\mathbf{BR}_{rew}}$ 
      \STATE {\bfseries Output:} policy $\pi_i^{\prime}(\hat{\theta})$
    \end{algorithmic}
    \end{algorithm}
\begin{algorithm}[h]
      \caption{Optimization for Differential Games}
      \label{alg:diff}
    \begin{algorithmic}[1] 
      \STATE {\bfseries Input:} population $\mathfrak{P}_{i}$ for each $i$, meta-game $\mathbf{A}_{\mathfrak{P_{i}}\times\mathfrak{P_{-i}}}$, weights $\lambda_1$ and $\lambda_2$, number of gradient updates $N_{train}$
      \STATE $\sigma_i, \sigma_{-i}\leftarrow \text{Nash on } \mathbf{A}_{\mathfrak{P_{i}}\times\mathfrak{P_{-i}}}$
      \STATE $\bm{\pi_E}=(\pi_i, \pi_{E_{-i}})\leftarrow \text{Aggregate according to }\sigma_i, \sigma_{-i}$ 
      \STATE $\pi_i^{\prime}(\theta)\leftarrow$ Initialize a new random policy for player $i$
      \FOR{$j=1$ {\bfseries to} $N_{train}$}{
      \STATE $p_j\leftarrow$ Compute payoff against the mixture of opponents $p_j = \sum_{k}\sigma_{-i}^{k}\phi_i(\pi_i^{\prime}, \pi_{-i}^{k})$
      \STATE $d_j^{\operatorname{occ}}\leftarrow$ Compute BD $d_j^{\operatorname{occ}}=D_{f}(\pi_i^{\prime}||\pi_i)$ as the $f$-divergence between $\pi_i$ and $\pi_i^{\prime}$
      \STATE $\mathbf{a}_{j}\leftarrow$ Compute new reward vector as $\mathbf{a}_{j}=(\phi_i(\pi_i^{\prime}, \pi_{-i}^{k}))_{k=1}^{|\mathfrak{P}_{-i}|}$
      \STATE $d_j^{\operatorname{rew}}\leftarrow$ Compute the lower bound of RD as $F(\mathbf{a}_{j})$ according to Theorem ~\ref{theorem:2}
      \STATE $l_j\leftarrow -(p_j + \lambda_1 d_j^{\operatorname{occ}} + \lambda_2 d_j^{\operatorname{rew}})$}
      \STATE Update $\theta$ to minimize $l_j$ by backpropagation
      \ENDFOR
      \STATE {\bfseries Output:} policy $\pi_i^{\prime}(\theta)$
    \end{algorithmic}
    \end{algorithm}

\section{Simplified Optimization Method for Unified Diverse Response}\label{app:oracle}
In Algorithm~\ref{alg:example}, we have outlined using RL as the optimization oracle for approximate best response. However, computing best response in real-world metagames (matrix games) or non-transitive mixture model (differential games) can be simplified, since the $f$-divergence objective can be simplified according to Theorem~\ref{theorem:1} or the reward function $\phi_i$ is analytically accessible. Now we provide the simplified optimization methods separately for matrix games in Algorithm \ref{alg:matrix_game} and differential games in Algorithm \ref{alg:diff}.
    
\newpage
\section*{References}
{
\small
[1] Lasse Espeholt, Hubert Soyer, Remi Munos, Karen Simonyan, Vlad Mnih, Tom Ward, Yotam Doron, Vlad Firoiu, Tim Harley, Iain Dunning, et al. Impala: Scalable distributed deep-rl with importance weighted actor-learner architectures. In International Conference on Machine Learning, pages 1407-1416. PMLR, 2018.

[2] Arpad E Elo. The rating of chessplayers, past and present. Arco Pub. 1978.

[3] Yury Polyanskiy. Definition and basic properties of f-divergences. 
\url{http://people.lids.mit.edu/yp/homepage/data/LN_fdiv.pdf}

[4] Karol Kurach, Anton Raichuk, Piotr Stanczyk, MichaZajac, Olivier Bachem, Lasse Espeholt, Carlos Riquelme, Damien Vincent, Marcin Michalski, Olivier Bousquet, et al. Google research football: A novel reinforcement learning environment. In Proceedings of the AAAI Conference on Articial Intelligence, volume 34, pages 4501-4510, 2020.
}
\end{document}